\newtheorem{proposition}{Proposition}[section]
\newtheorem{theorem}[proposition]{Theorem}
\newtheorem{definition}[proposition]{Definition}
\newtheorem{remark}[proposition]{Remark}
\numberwithin{equation}{section}
\numberwithin{figure}{section}
\newcommand{\reali}{{\mathbb{R}}}
\newcommand{\interi}{{\mathbb{Z}}}
\newcommand{\naturali}{{\mathbb{N}}}
\renewcommand{\epsilon}{\varepsilon}
\renewcommand{\phi}{\varphi}
\renewcommand{\theta}{\vartheta}
\renewcommand{\d}[1]{\mathinner{\mathrm{d}{#1}}}
\author{Claudio Canuto}
\author{Anna Cattani}
\affil{Dipartimento di Scienze Matematiche, Politecnico di Torino,\\ Corso D. Abruzzi, 24, 10129 Torino, Italy}
\title{The derivation of continuum limits of neuronal networks with gap-junction couplings}
\date{April 30, 2013}
\begin{document}

\maketitle

%The abstract of your paper
\begin{abstract}
  \smallskip
  We consider an idealized network, formed by $N$ neurons individually described by the FitzHugh-Nagumo equations
and connected by electrical synapses. The limit for $N \to \infty$ of the resulting discrete model is thoroughly
investigated, with the aim of identifying a model for a continuum of neurons having an equivalent behaviour. Two strategies
for passing to the limit are analysed: i) a more conventional approach, based on a fixed nearest-neighbour
connection topology accompanied by a suitable scaling of the diffusion coefficients; ii) a new approach, in which the number of connections to any given neuron varies with $N$ according to a precise law, which simultaneously
guarantees the non-triviality of the limit and the locality of neuronal interactions.  Both approaches yield in the limit 
a pde-based model, in which the distribution of action potential obeys a nonlinear reaction-convection-diffusion equation; 
convection accounts for the possible lack of symmetry in the connection topology. Several convergence issues are
discussed, both theoretically and numerically.
\end{abstract}

\section{Introduction}
The computer simulation of the behaviour of complex networks with a huge number of nodes, such as networks of neurons in
some portion of the brain, is a formidable challenge. The intrinsic difficulties of such a task may be alleviated to some extent by
identifying one or more multiscale structures within the networks; this allows one to describe and simulate different scales by different models, while posing the problem of the interaction among scales. 

Within a multiscale framework, the co-existence of
discrete and continuous models is a natural option, which may lead to significant savings. Higher-level nodes, or interactions, may be affordably given an individual description (e.g., by a system of coupled ordinary differential equations), whenever their number is small to moderate. On the contrary, this approach would be computationally prohibitive for
the description of lower-level nodes or interactions, if their number is exceedingly large. In this case, a possible alternative may consist  in modelling the huge population of individuals by a continuum, confined in some spatial region, and describe its
behaviour by a limited number of variables, e.g., submitted to satisfy partial differential equations. One recognizes here a process underlying the mathematical description of many physical phenomena, e.g., in Fluid Dynamics.

The derivation of a continuum model may be accomplished by ``passing to the limit'' in a discrete model, assuming that the
number of individuals tends to infinity. The present paper aims at investigating one such limit process. Specifically,
the discrete model we start from is inspired by the connections of electrical (rather than chemical) nature within a neuronal network. In order to better motivate our model, we provide now some biological background.

%Our aim is to describe, by using the mathematical formalism, the dynamics of signals within electrically coupled networks. 

Despite in the past years almost all networks have been represented as constituted by neurons that are interconnected by chemical synapses, electrical synapses are largely present in the nervous system.
In the sequel, we will use indifferently the terms \textit{electrical synapses} and \textit{gap junctions}. However, for the sake of completeness, gap junctions are the morphological equivalent of electrical synapses. In particular, as specified in~\cite{KeenerSneyd}, gap junctions exist between near-neighbour neurons and they allow low-resistance electrical transmissions. Indeed, at an electrical synapse a current $I_{\rm gap}$ is generated which is proportional to the difference between the action potentials $v$ of the post-synaptic and pre-synaptic neurons (see, e.g.,~\cite{Ermentrout} eq. (7.12));
explicitly, we have for some $d>0$
\begin{equation}\label{eq:gapjunction}
I_{\rm gap} = d (v_{\rm post} - v_{\rm pre}) \;.
\end{equation}
This establishes a \textit{diffusive coupling} between neighbouring neurons.

Unfortunately, the analysis of electrical synapses \textit{in situ} presents severe technical difficulties and therefore their specific roles are still largely unexplored. Nevertheless, in the past ten years, the topic concerning gap junction networks has been object of several investigations, sometimes leading to paradoxical results (see e.g.~\cite{Galarreta, Marder, Yu}). 

In order to build up the sample network we will consider, several ingredients are taken into account.
First of all, we model each single cell as an excitable element by exploiting the FitzHugh-Nagumo model (see~\cite{Fitzhugh}). The excitable feature means that neurons may not fire intrinsically without any synaptic inputs. Furthermore, each cell belongs to the same functional class, avoiding the presence of heterogeneity. This agrees with authors in~\cite{Galarreta} who stress that electrical synapses exist exclusively between neurons of a specific class. In particular, despite many works underline the presence of electrical synapses between inhibitory neurons (see e.g.~\cite{Galarreta}), the existence of electrical connections between excitatory neurons is demonstrated in the early postnatal stages (see~\cite{Yu}).
Finally, as we will specify, we consider the presence of both bidirectional (non-rectifying) and unidirectional (rectifying) synapses as claimed in~\cite{Marder}.

We now describe the content of this paper. After setting our mathematical model of an idealized neuronal network with
electrical-type coupling between neurons, we carefully investigate the ``passage to the limit'' as the number of neurons
tends to infinity, while they remain confined in a fixed and bounded spatial region. We identify two different manners of
increasing the population of the network so that a non-trivial continuum limit is obtained. The first one assumes a fixed 
topology of the network (nearest-neighbour connections) but makes the proportionality coefficient in~\eqref{eq:gapjunction} to depend upon the total number of neurons according to a specific law; conversely, the second manner keeps this
coefficient fixed but suitably increases the number of connections per neuron. Both methods lead to equivalent continuous
models, in which the action potential is the solution of a reaction-diffusion partial differential equation
 (or a reaction-convection-diffusion
equation if connections are not symmetric, i.e., if rectifying synapses are allowed). 
Our arguments apply in any spatial dimension,
although we detail them in 1D and we sketch their extension to 2D. Clear numerical evidence confirms all theoretical results.
At last, an example of random connections is also presented.

\section{The FitzHugh-Nagumo model of a single neuron}
The FitzHugh-Nagumo model~\cite{Fitzhugh} was introduced as a dimensional reduction of the well-known Hodgkin-Huxley model~\cite{HodgkinHuxley}. It extracts the Hodgkin-Huxley fast-slow phase plane and presents it in a simplified form. The resulting model is more analytically and numerically tractable and it maintains a certain biophysical meaning.
The model is constituted by two equations in two variables $v$ and $r$. The first one is the fast variable called \textit{excitatory}: it represents the transmembrane voltage. The second variable is the slow \textit{recovery variable}: it describes the time dependence of several physical quantities, such as the electrical conductance of the ion currents across the membrane.
The FitzHugh-Nagumo equations, using the notation in~\cite{Murray}, are given by:
\begin{equation}\begin{aligned}
\label{Eq:FN}
\dot{v}&=-v(a-v)(1-v)-r=:f(v,r)\;,\\
\dot{r}&=bv-cr=:g(v,r)\;,
\end{aligned}\end{equation}
where, $a,\,b,\,c\in\mathbb{R}^+$ are parameters of the model.
The model describes neurons as excitable elements which have two key properties. Firstly, they are characterized by their excitability behaviour: a sufficiently large stimulus provokes a very large response, that is, a small perturbation to the quiescent state of a neuron can provoke a large excursion of its potential. Secondly, they are characterized by their refractoriness: the elements cannot be excited during the period which follows the stimulus.

Throughout the paper, following~\cite{Wallisch}, we will adopt the FitzHugh-Nagumo model with $a=0.25$, $b=0.001$, $c=0.003$, to describe the behaviour of each neuron in our network.

\section{Diffusive coupling within the network}
%We sustain that the graph theory is the most suitable context to tackle this issue. According to this choice we are able to establish that links among neurons satisfy a specific rule in which also not near neighbour interactions are allowed. Not near neighbour interactions imply that we are allowing axons to reach neurons that are far from the ones that generated the spike, as is typical in the brain.
We suppose that our network contains $N$ neurons, identified by integer labels $i=1,\cdots ,N$; labels may refer to the physical position of the neurons, but other ways to index neurons could be used in a more convenient way.
Electrical-type connections in the neuronal network are easily described by basic concepts from graph theory (see, e.g.,~\cite{BapatKalitaPati}).
Let us consider a graph $G=(V,E)$, where $V=\{1,\cdots ,N\}\subset\naturali$ is the set of vertices and $E\subset V\times V$ is the set of edges.
The so-called \textit{adjacency matrix} $A_G=(a_{ij})$ is an $N\times N$ matrix whose entries are:
\begin{displaymath}
   a_{ij}=
	\begin{cases} w_{ij} & \mbox{if }(i,j)\in E(G)\\
				  0 & \mbox{else}\;,
	\end{cases}
\end{displaymath} where $i,\,j=1,\cdots ,N$ and the weights are strictly positive.

Exploiting the adjacency matrix, and assuming the gap-junction law~\eqref{eq:gapjunction} for the interaction between adjacent neurons, we define the FitzHugh-Nagumo model with diffusive coupling as follows:
   \begin{equation}\begin{aligned}
	\dot{v}_i&= f(v_i,r_i)+\sum_{j\neq i} a_{ij}(v_j-v_i)\;,\\[-5pt]
	\dot{r}_i&= g(v_i,r_i).
	\label{Eq:FHNiNew}
	\end{aligned}\end{equation}
Specifically, the summation describes the influence on the $i-$th neuron of all neurons linked to it; it produces a diffusion effect within the network. The simplest example is given by the expression in~\eqref{Eq:DiffCoup}, which models nearest-neighbour interactions in a chain of neurons.

Introducing the diagonal \textit{degree matrix} $D_G:=\mbox{diag}(d_i)$ with $d_i=\sum_{j\neq i}a_{ij}$, and the \textit{Laplacian matrix} $L_G:=D_G-A_G=(l_{ij})$, the previous system can be written as 
\begin{displaymath}\begin{aligned}
	\underline{\dot{v}}&=\underline{f}(\underline{v},\underline{r})-L_G\underline{v}\;,\\
	\underline{\dot{r}}&=\underline{g}(\underline{v},\underline{r})\;,
\end{aligned}\end{displaymath}
where $\underline{v}=(v_i)$, $\underline{r}=(r_i)$ and $\underline{f}(\underline{v},\underline{r})=(f(v_i,r_i))$, $\underline{g}(\underline{v},\underline{r})=(g(v_i,r_i))$.

In the sequel, we assume that all weights $w_{ij}$ are equal and precisely $w_{ij}=d$ for some $d>0$, which we will call the \textit{diffusion coefficient}. Let us introduce the set $\mathcal{Q}(i)$ of all indices $q$ such that neuron $i+q$ is linked to neuron $i$, i.e., $a_{i,i+q}\neq 0$.
Then, the model~\eqref{Eq:FHNiNew} can be written as 
\begin{equation}\begin{aligned}
	\dot{v}_i&=f(v_i,r_i)+d\sum_{q\in \mathcal{Q}(i)}(v_{i+q}-v_i)\;,\\[-5pt]
	\dot{r}_i&=g(v_i,r_i)\;.
	\label{Eq:FHNiQ}
\end{aligned}\end{equation} 
In most cases, we shall consider $ \mathcal{Q}(i)= \mathcal{Q}$ independent of $i$, thus assuming a homogeneous network
topology.
%For convenience, we introduce the notation 
%\begin{equation}
%\label{Eq:DiffCoup}
%(Lv)_i=-\sum_{q\in \mathcal{Q}(i)}(v_{i+q}-v_i).
%\end{equation}

We are interested in describing the behaviour of the network as the number of neurons increases, identifying conditions on the model which lead to non-trivial asymptotic patterns in the limit $N\rightarrow\infty$.
We assume that the network is contained in a bounded region $B$ (independent of $N$) of the Euclidean space $\reali ^m$, for some $1\leq m\leq 3$; let us denote by $x_i\in B$ the physical position of the $i-$th neuron. Then, we assume that the distance of any point $\hat{x}\in B$ from the network tends to zero as $N\rightarrow\infty$, and the distance of each neuron from its neighbours in the network has a similar behaviour.

If interactions between neurons are local, we can give an expression of the diffusive term in~\eqref{Eq:FHNiQ} which is based on the Taylor expansion of the differences $\Delta v_{i,q}=v_{i+q}-v_i$. Precisely, let us assume that at each time there exists a sufficiently smooth function $v$ defined in $B$ such that $v_i=v(x_i)$ for $i=1,\cdots ,N$. Then, setting $\Delta x_{i,q}=x_{i+q}-x_i$, we have
\begin{equation}
\Delta v_{i,q}=\nabla v(x_i)\Delta x_{i,q}+\frac{1}{2}\Delta x_{i,q}^T Hv(x_i)\Delta x_{i,q}+\mbox{h.o.t.}\;,
\label{Eq:TaylorGen}
\end{equation}
where $\nabla v$ denotes the gradient vector of $v$, whereas $Hv$ denotes the Hessian matrix of $v$.
Substituting this expression into~\eqref{Eq:FHNiQ}, we obtain a representation of the diffusive term by which we can find the conditions on the coefficient $d$ and/or the sets $\mathcal{Q}(i)$ (depending on the network) yielding a non-trivial limit as $N\rightarrow\infty$. We will detail our analysis assuming a specific distribution of neurons in the one-dimensional case first, and then we will consider the multi-dimensional extension.
   
\section{One-dimensional dynamics}
We consider neurons disposed over a closed chain, i.e., a ring. Each neuron occupies a specific physical position $x_i$ in the interval $B=[0,1]$ given by
\begin{equation}
x_i=(i-1)\Delta x=\frac{i-1}{N}\quad\mbox{with}\ 1\leq i\leq N\;,
\label{Eq:Mesh}
\end{equation}
where $N$ is the number of elements equally distributed along the chain and, consequently, $\Delta x=1/N$ is the distance between any two adjacent ones. Since the chain is closed, we assume periodic boundary conditions, i.e., we set $v_0=v_N$ and $v_{i+kN}=v_i$ for any $k\in\interi$.

\subsection{Nearest-neighbour interactions}
\label{NNI}
Let us first consider two symmetric nearest-neighbour interactions for each neuron. This translates in considering the set of connections per neuron $\mathcal{Q}(i)=\mathcal{Q}=\{\pm 1\}$.  In this case, the diffusive coupling assumes the following form:
\begin{equation}
-(L_Gv)_i=d\sum_{q=\pm 1}(v_{i+q}-v_i)=d[(v_{i+1}-v_i)+(v_{i-1}-v_i)]=d(v_{i+1}-2v_i+v_{i-1})\;.
\label{Eq:DiffCoup}
\end{equation}
An interesting dynamics produced by~\eqref{Eq:FHNiQ}, which will represent a test case for the subsequent discussion, is obtained by applying an initial stimulus to the central neuron ($i=N/2$, assuming $N$ even) of the line. Specifically, its action potential is initially set to the value $2$, whereas all the other variables are set to $0$. Considering the diffusion coefficient $d=0.05$ (see~\cite{Wallisch}), the resulting dynamics is constituted by two pulses that travel in opposite directions in the whole set of neurons (see e.g.~\cite{KeenerSneyd} for the analysis of travelling pulses). A sample dynamics is shown in Figure~\ref{Fig:Links2d1}. We observe for further reference that a similar dynamics is obtained starting from an initial stimulus of the action potential given by a Gaussian function concentrated around the central neuron. In all cases, at the end of dynamics, neurons return to the quiescent state. In fact, neurons are modelled as excitable units and then, after the excitation, they undergo a long refractory period. In this period they are blind to any stimulus. This is the reason why two travelling pulses that collide depress their signals.

\begin{figure}[t!]
  \centering
  \subfigure{\includegraphics[width=0.7\textwidth]{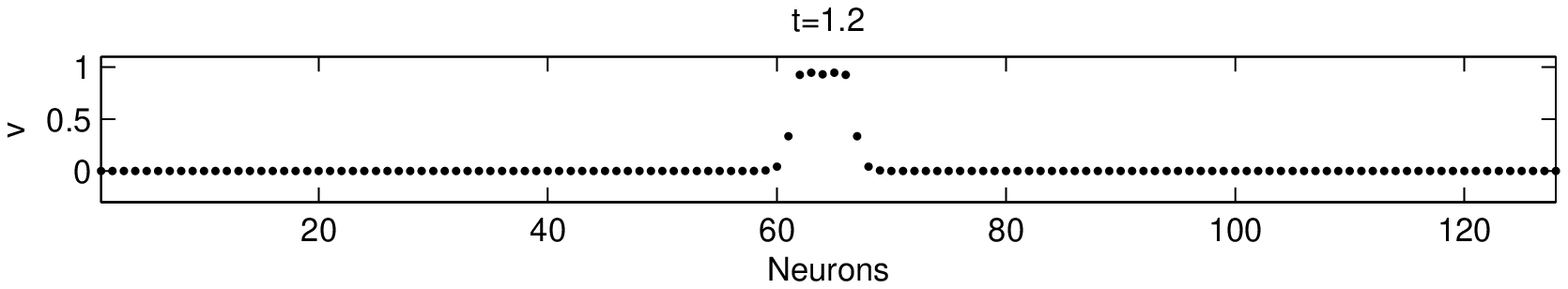}}\\\vspace{-0.65cm}
  \subfigure{\includegraphics[width=0.7\textwidth]{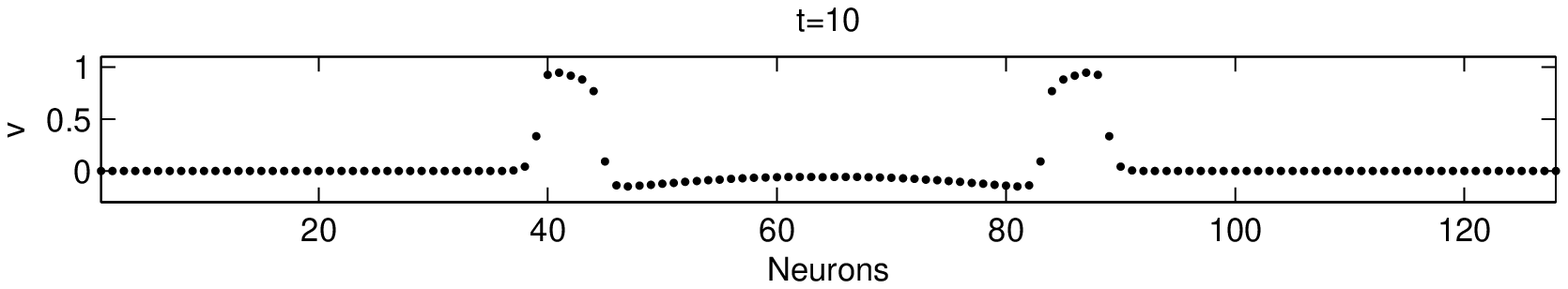}}\\
 \caption{\label{Fig:Links2d1} Propagation of an initial pulse along a closed ring of $N=128$ neurons}
\end{figure}

We now focus on how the dynamics produced by our model depends upon $N$. The first observation is that, if the diffusion parameter $d$ is kept fixed, then the diffusive effect tends to vanish as $N\rightarrow\infty$. This can be seen in two ways. On the one hand, if neurons get close to each other and the action potential varies in a smooth manner, then the differences on the right-hand side of~\eqref{Eq:DiffCoup} tend to zero, implying the vanishing of the diffusion term $L_Gv$ in each node. On the other hand, considering the test case introduced above, it is easily seen that the effect of, say, doubling $N$ is equivalent to have a chain of neurons with the same spacing but with double length; this means that on the original chain, waves have half the length and propagate with half the speed.

In order to obtain non-trivial diffusion effects in the limit, one possibility - that we call Approach I - consists in letting the parameter $d$ grow with $N$, i.e., $d=d_N$. The precise dependence can be found by exploiting the Taylor expansion~\eqref{Eq:TaylorGen}, which in the present setting becomes
\begin{equation}
\Delta v_{i,q}=q\Delta x v^\prime(x_i) + \frac{1}{2}q^2\Delta x^2 v^{\prime\prime}(x_i) + \mbox{h.o.t.}\;,
\label{Eq:Taylor}
\end{equation}
where the prime indicates differentiation with respect to the spatial variable $x$.
Therefore, the following expression holds for the diffusive term:
\begin{equation}
-(L_Gv)_i=d_N\left[(v_{i+1}-v_i)+(v_{i-1}-v_i)\right]=d_N[\Delta x^2 v^{\prime\prime}(x_i) +\mbox{h.o.t.}]\;.
\label{Eq:diffApproachOne}
\end{equation}
We choose $d_N$ in such a way that $d_N\Delta x^2$ is independent on $N$, say
\begin{equation}\label{Eq:def-dN}
d_N\Delta x^2=d^\ast
\end{equation}
for a fixed constant $d^\ast >0$. Hence, we obtain 
\begin{equation}
d_N=\frac{d^\ast}{\Delta x^2}=d^\ast N^2\;,
\label{Eq:dN}
\end{equation}
i.e., $d_N$ is proportional to the square of the number of neurons. The fact that $d_N$ is proportional to $N^2$ is not surprising: the spectral gap of the Laplacian matrix has the same behaviour as $1/(N^2)$.

As $N\rightarrow\infty$, the discrete model 
\begin{equation}\begin{aligned}
	\dot{v}_i&=f(v_i,r_i)+d_N\left[(v_{i+1}-v_i)+(v_{i-1}-v_i)\right]\;,\\
	\dot{r}_i&=g(v_i,r_i)\;,
	\label{Eq:DiscreteModel}
\end{aligned}\end{equation}
``converges'' to a continuous model. To support this statement, we observe that the quantity $\mbox{h.o.t.}$ in~\eqref{Eq:diffApproachOne} is given by
\begin{equation}
\nonumber
\mbox{h.o.t.}=\frac{1}{12}\Delta x^4 v^{(iv)}(\bar{x}_i)\;,
\end{equation}
where $\bar{x}_i\in (x_{i-1},x_{i+1})$ and $v^{(iv)}$ is assumed continuous in $[0,1]$. Thus, we have
\begin{equation}\label{eq:TaylorExpansionAppI}
\begin{split}
-(L_Gv)_i&=\frac{d^\ast}{\Delta x^2} \left[(v_{i+1}-v_i)+(v_{i-1}-v_i)\right] \\
&=d^\ast v^{\prime\prime}(x_i)+\frac{d^\ast}{12}\Delta x^2 v^{(iv)}(\bar{x}_i)\;.
\end{split}
\end{equation}
It follows that if we fix any point $\hat{x}\in [0,1]$ and, for each $N$, we consider a neuron of index $i=i(N)$ such that
\begin{equation}
\nonumber
x_{i(N)}=\frac{i(N)}{N}\rightarrow\hat{x}\quad \mbox{as }N\rightarrow\infty\;,
\end{equation}
then,
\begin{equation}
\nonumber
\lim_{N\rightarrow\infty}d_N \hspace{-0.1cm}\sum_{q\in\mathcal{Q}}(v_{i(N)+q}-v_{i(N)})=d^\ast v^{\prime\prime}(\hat{x})\;.
\end{equation}
We conclude that a continuum of neurons is the results of the limit process of letting $N\rightarrow\infty$, and 
\begin{equation}
\label{Eq:LimitSym}
\begin{array}{l}
\displaystyle\frac{\partial v}{\partial t}= f(v,r)+d^\ast\frac{\partial ^2 v}{\partial x^2}\;,\\[10pt]
\displaystyle\frac{\partial r}{\partial t}=g(v,r)\;,
\end{array}
\end{equation}
is the system of nonlinear partial differential equations of incomplete parabolic type which describes the action potential and the recovery variable in the whole set of neurons. Note that the first equation is similar to the so-called cable equation, which describes the distribution of the potential along the axon of a single neuron (see, e.g.~\cite{Ermentrout,Scott}). 
Reaction-diffusion models like~\eqref{Eq:LimitSym} are studied e.g. in~\cite{Fitzhugh1968mot,Smoller}.

We observe that the discrete model~\eqref{Eq:dN}--\eqref{Eq:DiscreteModel} can be viewed as a numerical semi-discretization (in space) of the PDE system~\eqref{Eq:LimitSym}, obtained by using a second-order centered finite difference method on the equally-spaced~\eqref{Eq:Mesh}. Thus, if the solution of~\eqref{Eq:LimitSym} is sufficiently smooth as in the case of an initial Gaussian stimulus, we expect to have quadratic convergence in $\Delta x$ of the discrete solutions, at any fixed time $t>0$, as it can be deduced from the fact that the error term on the right-hand side of~\eqref{eq:TaylorExpansionAppI} is proportional to $\Delta x^2$.

We now give an example. Following the choice of parameters presented in~\cite{Wallisch}, we set $d=0.05$ and we consider the case $N=128$ as a reference one, i.e., we impose $d_N=d$ for $N=128$, which yields
\begin{equation}
d^\ast=\frac{0.05}{128^2}=3.0518\cdot 10^{-6}\;.
\label{Eq:dstar}
\end{equation}
A comparison of several discrete solutions is presented in Figure~\ref{Fig:ConfrS1}. The (b) plots clearly document the convergence of the discrete dynamics towards a limit one.  
Note that these dynamics are generated by applying an initial stimulus ${v_i}_{|t=0}=2$ to a number of neurons proportional to $N$ around the center of the chain; in the limit, the initial action potential takes the value $2$ in an interval of positive length symmetrically placed around the point $x=1/2$, and vanishes elsewhere.
\begin{figure}
\centering
\begin{minipage}{\textwidth}
\centering
  \subfigure{\includegraphics[width=0.65\textwidth]{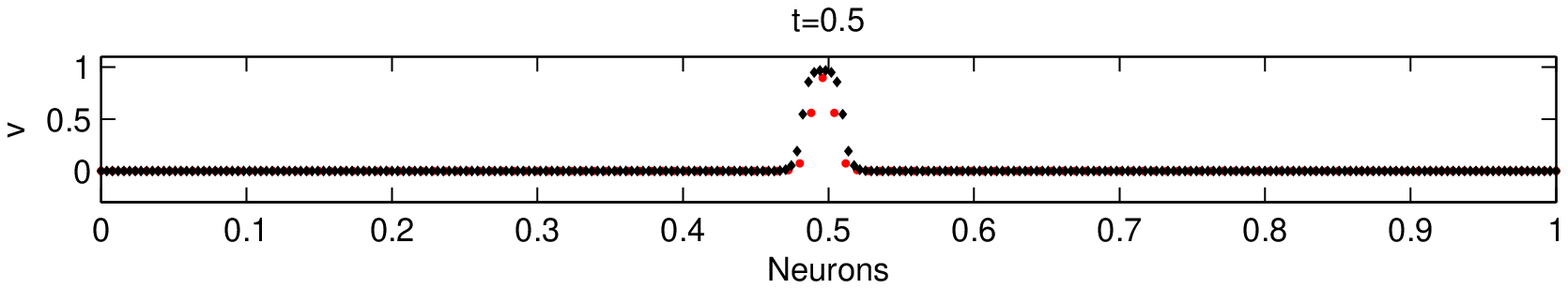}}\\\vspace{-0.65cm}
  \subfigure{\includegraphics[width=0.65\textwidth]{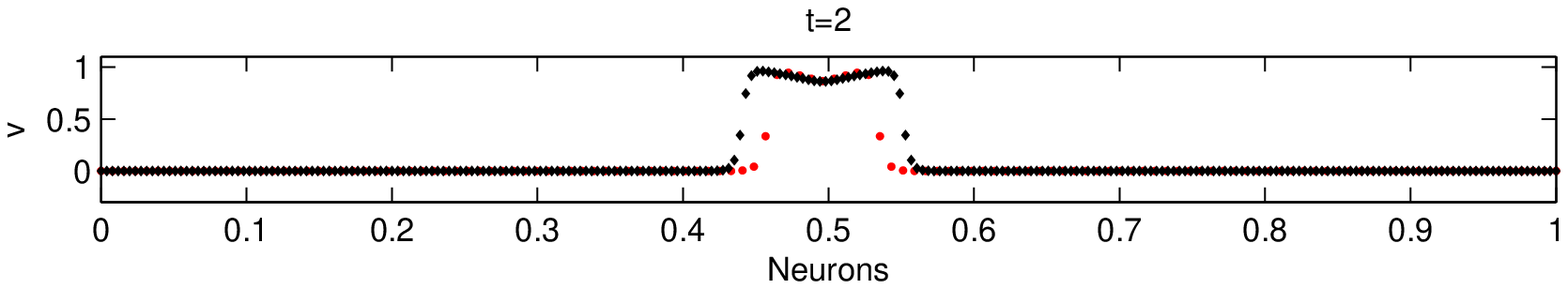}}\\\vspace{-0.65cm}
  \subfigure{\includegraphics[width=0.65\textwidth]{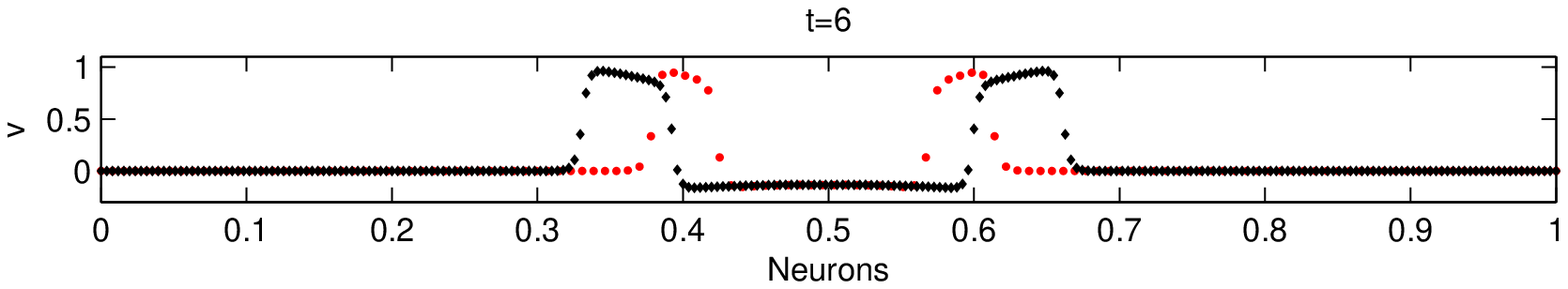}}\\\vspace{-0.65cm}
  \setcounter{subfigure}{0}
  \subfigure[]{\includegraphics[width=0.65\textwidth]{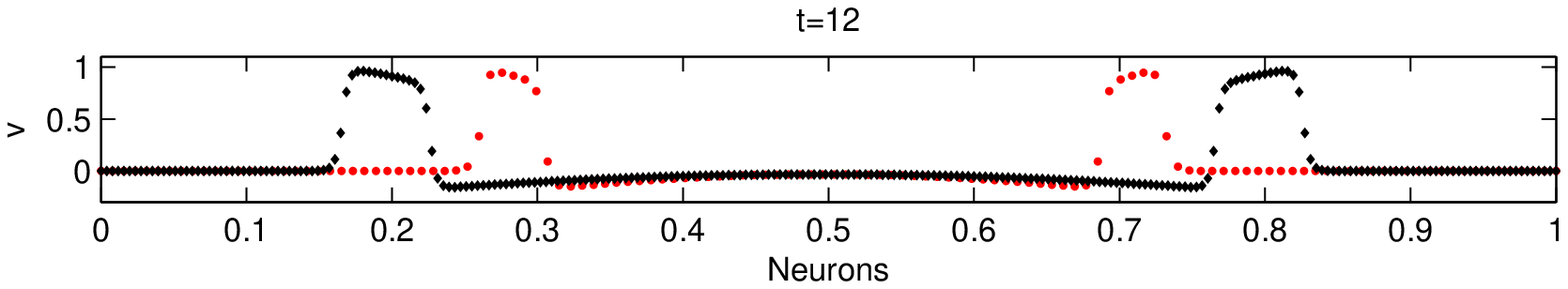}}
 %\caption{\label{Fig:ConfrS1_128_256} Comparison of dynamics obtained by considering $N=128$ (red dots) and $N=256$ (black dots). The considered approach is the first one.}
\end{minipage}
\begin{minipage}{\textwidth}
\centering
  \subfigure{\includegraphics[width=0.65\textwidth]{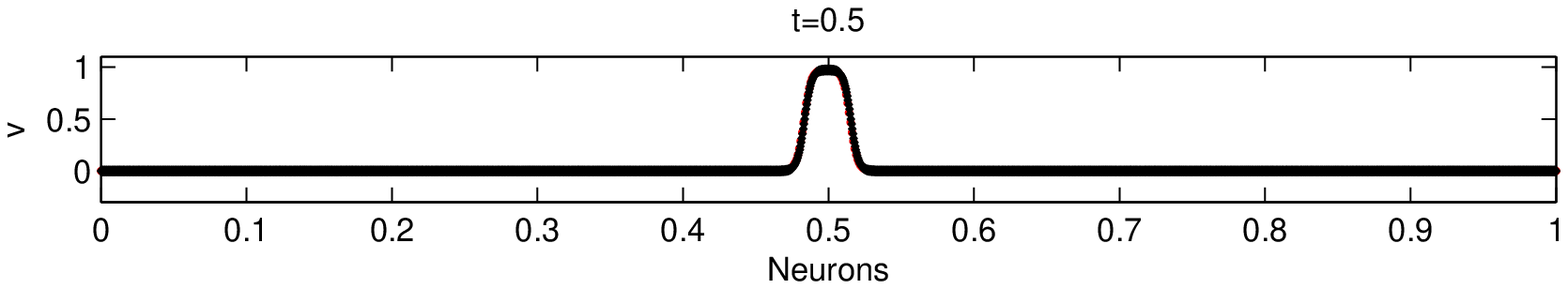}}\\\vspace{-0.65cm}
  \subfigure{\includegraphics[width=0.65\textwidth]{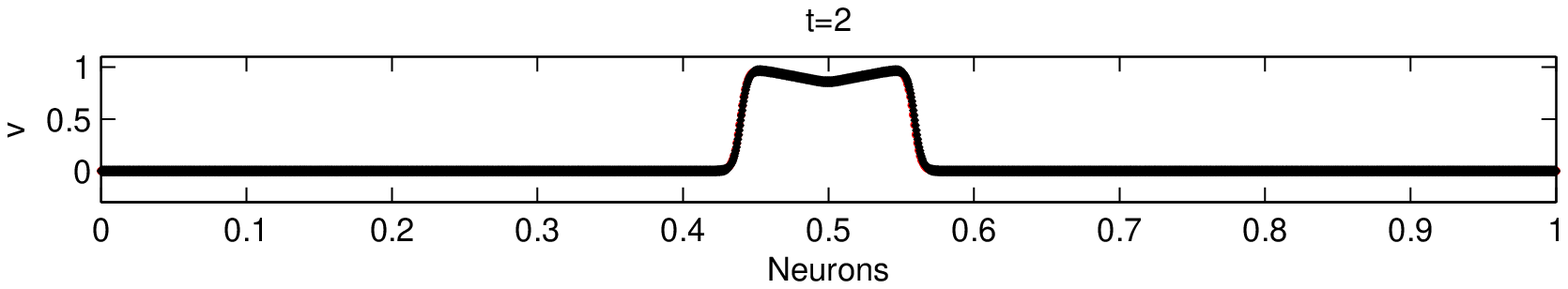}}\\\vspace{-0.65cm}
  \subfigure{\includegraphics[width=0.65\textwidth]{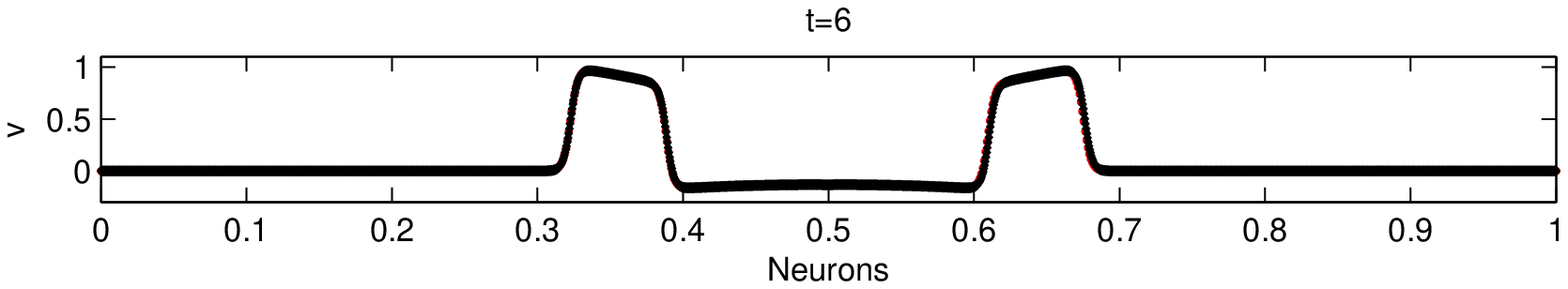}}\\\vspace{-0.65cm}
    \setcounter{subfigure}{1}
  \subfigure[]{\includegraphics[width=0.65\textwidth]{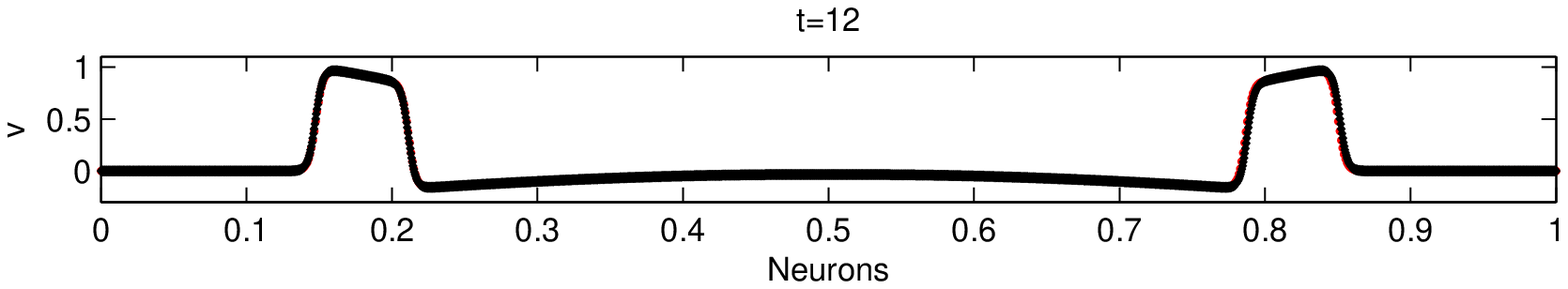}}
 %\caption{\label{Fig:ConfrS1_1024_2048} Comparison of dynamics obtained by considering $N=1024$ (red dots) and $N=2048$ (black dots). The considered approach is the first one.}
\end{minipage}
\caption{\label{Fig:ConfrS1} 
Convergence of the discrete model~\eqref{Eq:DiscreteModel}-\eqref{Eq:dN} (Approach I) as $N \to \infty$. Evolution 
of pulses (a) for $N=128$ (red dots) and $N=256$ (black dots), (b) for $N=1024$ (red dots) and $N=2048$ (black dots)
}
\end{figure}

\begin{remark}\label{rem:var-coeff}
A more general situation considers $d=d_N$ in~\eqref{Eq:DiffCoup} also depending on $i$ and $q$, i.e., 
the diffusive coupling law is replaced by
\begin{equation}
-(L_Gv)_i=d_{i+1}(v_{i+1}-v_i)+d_{i-1}(v_{i-1}-v_i)\;,
\label{Eq:DiffCoupModif}
\end{equation}
where $d_{i\pm 1}=N^2 d^*(x_{i\pm 1})$ and $d^*$ is a smooth function. In the limit,  the diffusion term in~\eqref{Eq:LimitSym} is replaced by $\frac{\partial}{\partial x}\left(d^\ast\frac{\partial v}{\partial x} \right)$. For simplicity, we confine ourselves to the constant-coefficient case.
\end{remark}

\subsection{Extended range interactions}
\label{Sec:Extended1D}
We now introduce a second approach to reproduce the same limit dynamics emerged above, which avoids rescaling the diffusion coefficient with the square of the number of neurons. This alternative way - which we call Approach II - consists of increasing the number of connections per neuron according to a specific law (and just slightly adjust the diffusion coefficient).

Since the core idea is to consider a number of connections per neuron that varies as a function of $N$, let us define the following set:
\begin{equation}\label{eq:QN}
\mathcal{Q}=\mathcal{Q}_N=\{\pm 1,\pm 2,\cdots ,\pm Q_N\}\;,
\end{equation}
where $Q_N$ is a positive integer to be determined.
Thus, neurons linked to the $i$-th one belong to the interval 
\begin{equation}
\label{Eq:I}
I=[x_i-Q_N\Delta x,x_i+Q_N\Delta x]\;.
\end{equation}
Using again Taylor expansions, the sum in the diffusive coupling becomes
\begin{equation}
\nonumber
\sum_{q\in\mathcal{Q}}(v_{i+q}-v_i)=\left(\sum_{q=1}^{Q_N} q^2\right)\Delta x^2 v^{\prime\prime}(x_i)+\mbox{h.o.t.}\;.
\end{equation}
Introducing the function $\varphi :\reali^+\rightarrow\reali^+$ defined as
\begin{equation}
\label{Eq:phi}
\phi (x)=\frac{x(x+1)(2x+1)}{6}
\end{equation}
%
%Let us define a function $\phi :\reali^{+}\rightarrow\reali^{+}$ such that 
%\begin{equation}
%\label{Eq:phi}
%\phi (x)=\frac{x(x+1)(2x+1)}{6}.
%\end{equation}
and invoking the identity 
\begin{equation}
\nonumber
\sum_{q=1}^{n} q^2=\phi(n)\quad \forall n\geq 1\;,
\end{equation}
we obtain, 
\begin{equation}
\label{Eq:DiffCoupER}
-(L_Gv)_i=d\sum_{q\in\mathcal{Q}_N}(v_{i+q}-v_i)=d[\phi(Q_N)\Delta x^2 v^{\prime\prime}(x_i)+\mbox{h.o.t.}]\;.
\end{equation}
We would like to choose ${Q}_N$ in such a way that 
\begin{equation}
d\frac{\phi(Q_N)}{N^2}=d^\ast\;,
\label{Eq:FindQd}
\end{equation}
for a fixed constant $d^\ast >0$. This equation admits a unique solution, say $Q_N^r$, which however need not be an integer. Therefore, we choose $Q_N$ as the nearest integer to $Q_N^r$.

\begin{proposition}
\label{Property}
The number of neurons linked to any given one grows proportionally to the power $N^{2/3}$ of the total number of neurons. Indeed\footnote{
For any two  non-negative sequences $A_N$ and $B_N$, we will use the symbols
\begin{eqnarray*}
A_N \sim B_N &\iff& A_N/B_N \to 1 \text{ for } N \to \infty \;, \\
A_N \simeq B_N &\iff& c B_N\leq A_N \leq c' B_N\quad\mbox{with } c,c'>0 \;, \\
A_N\lesssim B_N &\iff& A_N\leq cB_N\quad\mbox{with }c>0.
\end{eqnarray*}
},
\begin{equation}
\nonumber
Q_N\sim Q^r_N\sim\left(\frac{3d}{d^\ast}N^2 \right)^{1/3}=c N^\frac{2}{3}\;.
\end{equation}
\end{proposition}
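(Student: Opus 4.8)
The assertion is a purely asymptotic statement about the inverse of the cubic $\phi$ in \eqref{Eq:phi}, so the plan is to (i) check that $Q_N^r$ is well defined, (ii) extract its leading-order growth from the leading term of $\phi$, and (iii) transfer the conclusion from $Q_N^r$ to the nearest integer $Q_N$.

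First I would record that $\phi(x)=\tfrac{1}{6}(2x^3+3x^2+x)$ satisfies $\phi(0)=0$, $\phi'(x)=\tfrac{1}{6}(6x^2+6x+1)>0$ for all $x\ge 0$ (both zeros of $6x^2+6x+1$ are negative), and $\phi(x)\to+\infty$. Hence $\phi$ is a continuous, strictly increasing bijection of $\reali^+$ onto itself, so for every $N$ equation \eqref{Eq:FindQd} has a unique positive root $Q_N^r=\phi^{-1}\!\bigl(\tfrac{d^\ast}{d}N^2\bigr)$, and since its argument tends to $+\infty$ we also get $Q_N^r\to+\infty$.

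Next, from
\[
\phi(x)=\frac{x^3}{3}\Bigl(1+\frac{3}{2x}+\frac{1}{2x^2}\Bigr)\sim\frac{x^3}{3}\qquad (x\to+\infty)
\]
evaluated at $x=Q_N^r$, together with \eqref{Eq:FindQd}, one obtains $\tfrac{1}{3}(Q_N^r)^3\sim\tfrac{d^\ast}{d}N^2$, i.e. $(Q_N^r)^3\sim\tfrac{3d^\ast}{d}N^2$; applying the (continuous, hence $\sim$-preserving) cube root gives $Q_N^r\sim c\,N^{2/3}$ with $c=(3d^\ast/d)^{1/3}$, which is the asserted power law — in particular the number $|\mathcal{Q}_N|=2Q_N$ of connections per neuron satisfies $|\mathcal{Q}_N|\simeq N^{2/3}$. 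Finally, since $Q_N$ is the nearest integer to $Q_N^r$ we have $|Q_N-Q_N^r|\le\tfrac12$, whence $Q_N/Q_N^r=1+O(1/Q_N^r)\to 1$ because $Q_N^r\to\infty$; thus $Q_N\sim Q_N^r$, and transitivity of $\sim$ closes the argument.

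I do not anticipate a genuine obstacle here: this is elementary asymptotic analysis. The only two points that deserve to be written out explicitly are the global monotonicity of $\phi$ on $\reali^+$ (so that $Q_N^r$ is unambiguously defined by \eqref{Eq:FindQd}) and the observation, used twice, that $\sim$ is preserved both under the cube root and under the rounding $Q_N^r\mapsto Q_N$ — both being immediate consequences of $Q_N^r\to\infty$.
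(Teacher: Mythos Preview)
Your proof is correct and follows exactly the paper's approach --- the paper's own argument is the one-liner ``the result follows recalling that $\phi(x)\sim x^3/3$ as $x\to\infty$'' --- with the surrounding details (well-definedness of $Q_N^r$ via monotonicity of $\phi$, preservation of $\sim$ under the cube root and under rounding to $Q_N$) spelled out explicitly. Incidentally, your computed constant $c=(3d^\ast/d)^{1/3}$ is the right one; the displayed $(3d/d^\ast)^{1/3}$ in the statement has $d$ and $d^\ast$ interchanged, a harmless typo since only the $N^{2/3}$ scaling is used downstream.
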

\begin{proof}
By definition, $Q_N^r$ satisfies
\begin{equation}
d\frac{\phi(Q_N^r)}{N^2}=d^\ast\;.
\label{Eq:FindQdr}
\end{equation}
The result follows recalling that $\phi (x)\sim \frac{x^3}{3}$ for $x\rightarrow\infty$.
\end{proof}

Let us underline that, although the number of neurons linked to any given one grows with $N$, interactions remain local, i.e., these neurons belong to a neighbourhood whose size decays with $N$. Indeed, considering the $i-$th neuron and recalling~\eqref{Eq:I}, we have
\begin{equation}
|I|\simeq Q_N\Delta x\simeq N^{-1/3}\;.
\label{Eq:ILaw}
\end{equation}
Thus, we expect that the limit model, as $N\rightarrow\infty$, be again described by partial differential equations.

As specified above, the slight shift from $Q_N^r$ to $Q_N$ provokes the necessity of slightly modifying the diffusion coefficient. Precisely, we define $d_N$ so that the identity
\begin{equation}
d_N\frac{\varphi (Q_N)}{N^2}=d^\ast\\
\label{Eq:dast}
\end{equation}
is satisfied. An alternative possibility, which will be explored later on and which leads to similar effects, would be to define $d_N^\ast$ so that
\begin{equation}
d\frac{\varphi (Q_N)}{N^2}=d_N^\ast\;.
\label{Eq:dNast}
\end{equation}
The coefficient $d_N$ is really a small perturbation of $d$, as the next proposition indicates.
\begin{proposition}
\label{PropdNd}
Let $d_N$ be the diffusion coefficient defined in~\eqref{Eq:dast}. Then, one has
\begin{equation}
\nonumber
|d_N-d|\lesssim N^{-\frac{2}{3}}\;.
\end{equation}
\end{proposition}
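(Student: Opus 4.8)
The plan is to work directly from the two defining identities. By~\eqref{Eq:dast} the coefficient $d_N$ satisfies $d_N\varphi(Q_N)=d^\ast N^2$, while by~\eqref{Eq:FindQdr} the real number $Q_N^r$ satisfies $d\,\varphi(Q_N^r)=d^\ast N^2$. Since both left-hand sides equal $d^\ast N^2$, subtracting them and dividing by $\varphi(Q_N)>0$ gives the clean representation
\[
d_N-d \;=\; d\,\frac{\varphi(Q_N^r)-\varphi(Q_N)}{\varphi(Q_N)}\;,
\]
so it suffices to bound the numerator from above and the denominator from below.

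For the numerator, recall that $Q_N$ is chosen as the nearest integer to $Q_N^r$, hence $|Q_N-Q_N^r|\le \tfrac12$. Since $\varphi$ is the cubic polynomial~\eqref{Eq:phi}, with $\varphi'(x)=x^2+x+\tfrac16$, the mean value theorem gives $|\varphi(Q_N^r)-\varphi(Q_N)|\le \varphi'(\xi_N)\,|Q_N^r-Q_N|$ for some $\xi_N$ lying between $Q_N$ and $Q_N^r$. By Proposition~\ref{Property} both $Q_N$ and $Q_N^r$ are $\sim cN^{2/3}$, so $\xi_N\simeq N^{2/3}$ and therefore $\varphi'(\xi_N)\simeq N^{4/3}$; consequently $|\varphi(Q_N^r)-\varphi(Q_N)|\lesssim N^{4/3}$. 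For the denominator, the asymptotics $\varphi(x)\sim x^3/3$ together with $Q_N\sim cN^{2/3}$ yield $\varphi(Q_N)\sim \tfrac{c^3}{3}N^2$, in particular $\varphi(Q_N)\simeq N^2$. Inserting these two estimates into the displayed identity and absorbing the fixed constant $d$ into $\lesssim$ gives $|d_N-d|\lesssim N^{4/3}/N^2=N^{-2/3}$, which is the claim.

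There is no serious obstacle here: the argument is a bookkeeping exercise once the representation $d_N-d=d(\varphi(Q_N^r)-\varphi(Q_N))/\varphi(Q_N)$ is isolated. The only points requiring a little care are (i) that $Q_N\to\infty$, so that the asymptotic estimates for $\varphi$ and $\varphi'$ are legitimate — this is guaranteed by Proposition~\ref{Property} — and (ii) that the intermediate point $\xi_N$ in the mean value theorem is controlled uniformly in $N$, which again follows since it is squeezed between two quantities of order $N^{2/3}$. An entirely analogous computation handles the alternative normalisation~\eqref{Eq:dNast}.
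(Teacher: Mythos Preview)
Your proof is correct and follows essentially the same route as the paper: both derive the identity $d_N-d=d\,(\varphi(Q_N^r)-\varphi(Q_N))/\varphi(Q_N)$ from~\eqref{Eq:FindQdr} and~\eqref{Eq:dast}, then use $|Q_N-Q_N^r|\le\tfrac12$ together with the growth of $\varphi$ and $\varphi'$ to obtain $|\varphi(Q_N^r)-\varphi(Q_N)|\lesssim N^{4/3}$ and $\varphi(Q_N)\simeq N^{2}$. The only difference is that you make the mean-value step explicit where the paper writes ``omitting computations''.
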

\begin{proof}
From~\eqref{Eq:FindQdr} and~\eqref{Eq:dast}, we obtain the following equality:
\begin{equation}
\label{Eq:Equal}
d_N\frac{\phi(Q_N)}{N^2}=d\frac{\phi(Q_N^r)}{N^2}\;.
\end{equation}
Since $Q_N$ is defined as the nearest integer to $Q_N^r$,
\begin{equation}
\label{Eq:QnQnr}
|Q_N^r-Q_N|\leq \frac{1}{2}\;,
\end{equation}
and then, $Q_N=Q_N^r+\varepsilon_N$ with a proper choice of $\varepsilon_N$, such that $|\varepsilon_N|\leq 1/2$.
Writing $\phi(Q_N^r)=\phi(Q_N)+\phi(Q_N^r)-\phi(Q_N)$
and substituting in~\eqref{Eq:Equal}, we get 
\begin{equation}
|d_N-d|=\frac{|\phi(Q_N^r)-\phi(Q_N)|}{\phi(Q_N)}d\;.
\end{equation}
Using~\eqref{Eq:QnQnr} and omitting computations, we conclude that $|\phi(Q_N^r)-\phi(Q_N)|\lesssim N^{4/3}$ while $\phi(Q_N)\simeq N^{2}$. 
This gives the desired estimate.
\end{proof}

In order to obtain the continuous model as a limit of the discrete model for $N\rightarrow\infty$, we observe that, if the fourth derivative of $v$ is continuous in $[0,1]$, the diffusion term~\eqref{Eq:DiffCoupER} takes the form
\begin{equation}\label{eq:TaylorExpansionAppII}
\begin{aligned}
-(L_Gv)_i&=d_N\left[\varphi (Q_N)\Delta x^2 v^{\prime\prime}(x_i)+\mbox{h.o.t.} \right]\\
&=d^\ast v^{\prime\prime}(x_i)+\frac{d_N}{12}\Delta x^4\sum_{q=1}^{Q_N}q^4 v^{(iv)}(\bar{x}_{i,q})\;,
\end{aligned}
\end{equation}
where $\bar{x}_{i,q}$ are suitable points in the interval $(x_{i-q},x_{i+q})$. Since $\sum_{q=1}^{Q_N}q^4\sim \frac{1}{5}Q_N^5$, using Property~\ref{Property} and Proposition~\ref{PropdNd}, we deduce that 
\begin{equation}\label{eq:errorBehaviour}
d_N\Delta x^4\sum_{q=1}^{Q_N}q^4\simeq N^{\frac{10}{3}-4}=N^{-\frac{2}{3}}\rightarrow 0\mbox{ as }N\rightarrow\infty\;.
\end{equation}

Therefore, proceeding as in Section~\ref{NNI}, if we fix any point $\hat{x}\in [0,1]$ and we consider a neuron of index $i=i(N)$ such that $x_{i(N)}\rightarrow\hat{x}$ as $N\rightarrow\infty$, we conclude that
\begin{equation}
\nonumber
{-(L_Gv)}_{i(N)}\rightarrow d^\ast v''(\hat{x})\mbox{ as }N\rightarrow\infty\;. 
\end{equation}
This means that Approach II yields in the limit the same system~\eqref{Eq:LimitSym} of partial differential equations, that we got from Approach I.

We now illustrate the asymptotic behaviour of the quantities defined above, for the same test case considered in the previous subsection.
We choose again $d=0.05$, and we enforce that for $N=N_0=128$ we have $Q_{N_0}=Q_{N_0}^r=1$, which corresponds to the nearest-neighbour interaction previously considered; we also enforce $d_{N_0}=d$, and consequently we get
\begin{equation}
\nonumber
d^\ast =\frac{d}{N_0^2}\;,
\end{equation}
which is precisely~\eqref{Eq:dstar}. Increasing $N$ by powers of $2$, i.e., setting $N=N_02^p$ with $p\geq 1$, the algorithm presented above produces the values of $Q_N$ and $d_N$ shown in Table~\ref{Tab:QandD}. The last column of this table, as well as Figure~\ref{Fig:diffdNd} (left), quantitatively support the asymptotic estimates proven in Propositions~\ref{Property} and~\ref{PropdNd}. Some representative dynamics obtained with Approach II are documented in Figure~\ref{ConfrS2}; they should be compared to those given in Figure~\ref{Fig:ConfrS1}. The evolutions of the action potentials produced by the discrete model with $N=1024$, and by a very accurate solution of the continuous model~\eqref{Eq:LimitSym} are documented in Figure~\ref{ConfrS12_1024}. While the shapes of the pulses are already well captured, their speed of propagation is less accurately 
reproduced; this should be related to the fourth-order error term on the right-hand side of~\eqref{eq:TaylorExpansionAppII},
whose decay is slower than in Approach I as indicated by~\eqref{eq:errorBehaviour} compared to~\eqref{eq:TaylorExpansionAppI}.

\begin{table}
\centering
\caption{Number of connections per neuron $Q_N$, diffusion coefficient $d_N$ and relative error as a function of $N=N_0 2^p$ with $N_0=128$.}
{\relsize{-2}
\begin{tabular}{rrrlc}
$p$ & $N=N_0 2^p$ & $Q_N$ & $d_N$ & $|d_N-d|/d$\\
\hline
$0$ & $128$ \rule{0pt}{2.5ex} \rule[-1ex]{0pt}{0pt} & 1 & 0.0500 & $0$\\
\hline
$1$ & $256$ \rule{0pt}{2.5ex} \rule[-1ex]{0pt}{0pt} & 2 & 0.0400 & 2.0$\cdot 10^{-1}$\\
\hline
$2$ & $512$ \rule{0pt}{2.5ex} \rule[-1ex]{0pt}{0pt} & 3 & 0.0571 & 1.4$\cdot 10^{-1}$\\
\hline
$3$ & $1024$ \rule{0pt}{2.5ex} \rule[-1ex]{0pt}{0pt} & 5 & 0.0582 & 1.6$\cdot 10^{-1}$\\
\hline
$4$ & $2048$ \rule{0pt}{2.5ex} \rule[-1ex]{0pt}{0pt} & 9 & 0.0490 & 1.0$\cdot 10^{-1}$\\
\hline
$5$ & $4096$ \rule{0pt}{2.5ex} \rule[-1ex]{0pt}{0pt} & 14 & 0.0504 & 8.7$\cdot 10^{-3}$\\
\hline
$6$ & $8192$ \rule{0pt}{2.5ex} \rule[-1ex]{0pt}{0pt} & 23 & 0.0473 & 5.3$\cdot 10^{-2}$\\
\hline
$7$ & $16384$ \rule{0pt}{2.5ex} \rule[-1ex]{0pt}{0pt} & 36 & 0.0505 & 1.1$\cdot 10^{-2}$\\
\hline
$8$ & $32768$ \rule{0pt}{2.5ex} \rule[-1ex]{0pt}{0pt} & 58 & 0.0491 & 1.8$\cdot 10^{-2}$\\
\hline
$9$ & $65536$ \rule{0pt}{2.5ex} \rule[-1ex]{0pt}{0pt} & 92 & 0.0496 & 6.3$\cdot 10^{-3}$\\
\hline
$10$ & $131072$ \rule{0pt}{2.5ex} \rule[-1ex]{0pt}{0pt} & 146 & 0.0500 & 4.9$\cdot 10^{-4}$\\
\hline
$11$ & $262144$ \rule{0pt}{2.5ex} \rule[-1ex]{0pt}{0pt} & 232 & 0.0500 & 1.2$\cdot 10^{-3}$\\
\hline
$12$ & $524288$ \rule{0pt}{2.5ex} \rule[-1ex]{0pt}{0pt} & 369 & 0.0500 & 2.3$\cdot 10^{-3}$\\
\hline
$13$ & $1048576$ \rule{0pt}{2.5ex} \rule[-1ex]{0pt}{0pt} & 586 & 0.0500 & 2.1$\cdot 10^{-3}$\\
\hline
$14$ & $2097152$ \rule{0pt}{2.5ex} \rule[-1ex]{0pt}{0pt} & 930 & 0.0500 & 4.3$\cdot 10^{-4}$\\
\hline
$15$ & $4194304$ \rule{0pt}{2.5ex} \rule[-1ex]{0pt}{0pt} & 1476 & 0.0500 & 7.4$\cdot 10^{-4}$\\
\hline
$16$ & $8388608$ \rule{0pt}{2.5ex} \rule[-1ex]{0pt}{0pt} & 2344 & 0.0500 & 1.6$\cdot 10^{-4}$\\
\hline
$17$ & $16777216$ \rule{0pt}{2.5ex} \rule[-1ex]{0pt}{0pt} & 3721 & 0.0500 & 3.0$\cdot 10^{-5}$\\
\hline
$18$ & $33554432$ \rule{0pt}{2.5ex} \rule[-1ex]{0pt}{0pt} & 5907 & 0.0500 & 2.3$\cdot 10^{-5}$\\
\hline
$19$ & $67108864$ \rule{0pt}{2.5ex} \rule[-1ex]{0pt}{0pt} &  9377 & 0.0500 & 2.9$\cdot 10^{-7}$\\
\hline
$20$ & $134217728$ \rule{0pt}{2.5ex} \rule[-1ex]{0pt}{0pt} &  14885 & 0.0500 & 7.1$\cdot 10^{-5}$\\
\end{tabular}
}
\label{Tab:QandD}
\end{table}

\begin{figure}[htbp]
\includegraphics[width=0.45\textwidth]{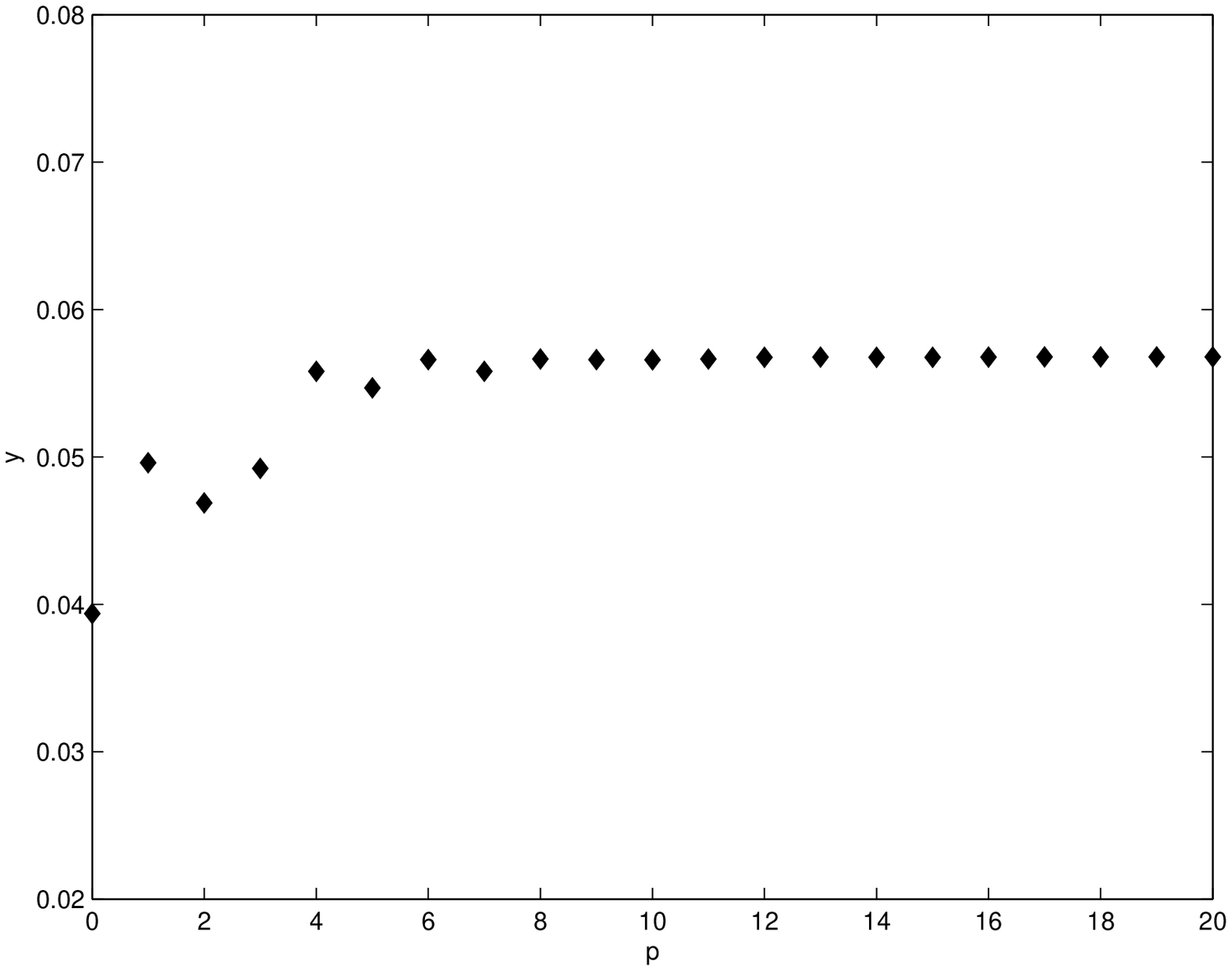}
\includegraphics[width=0.45\textwidth]{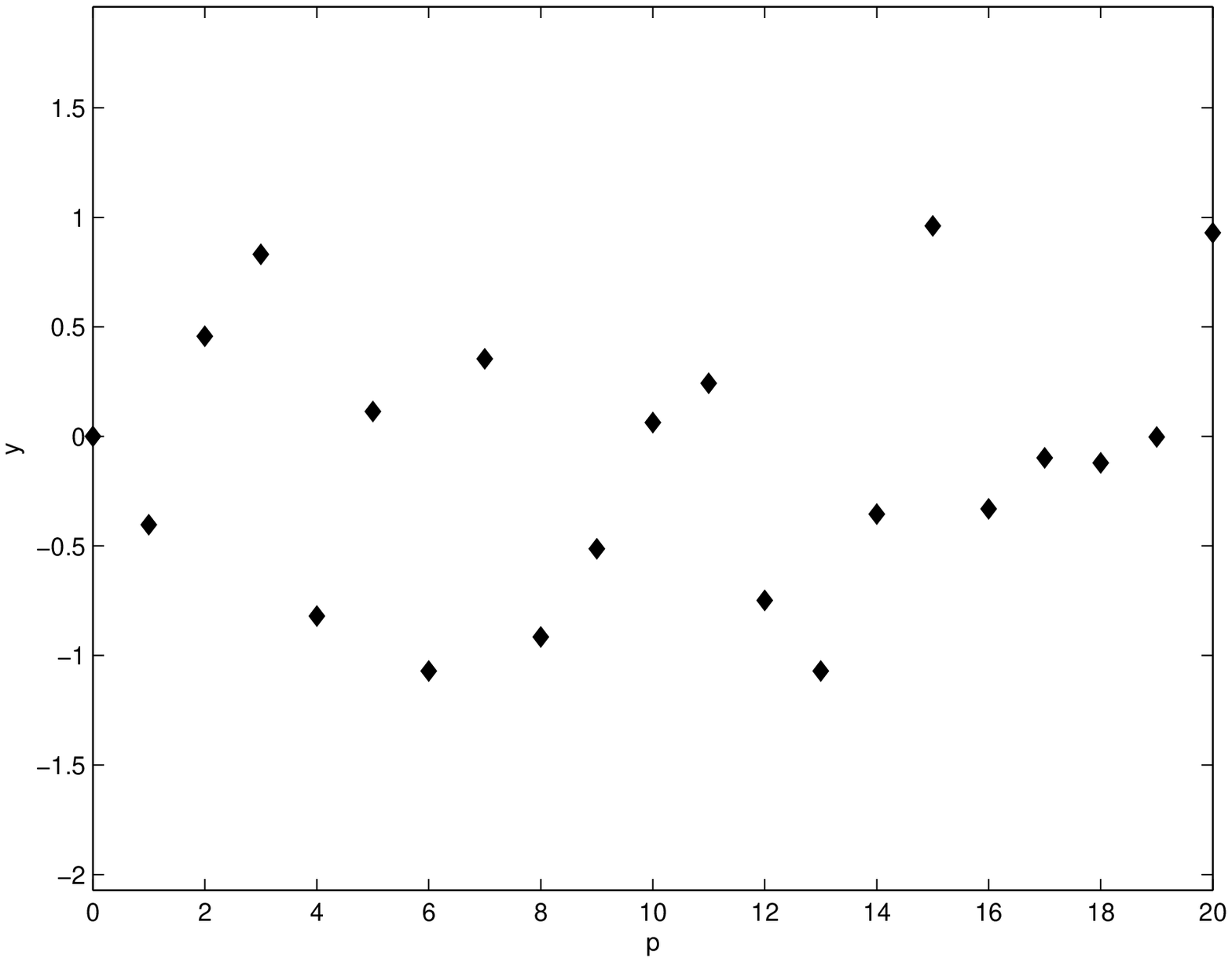}
\caption{\label{Fig:diffdNd} Plots of $Q_N/N^{2/3}$ (left) and $(d_N-d)/N^{-2/3}$ (right) vs $p$, where $N=N_02^p$}
\end{figure}

\begin{figure}
\centering
\begin{minipage}{\textwidth}
\centering
  \subfigure{\includegraphics[width=0.65\textwidth]{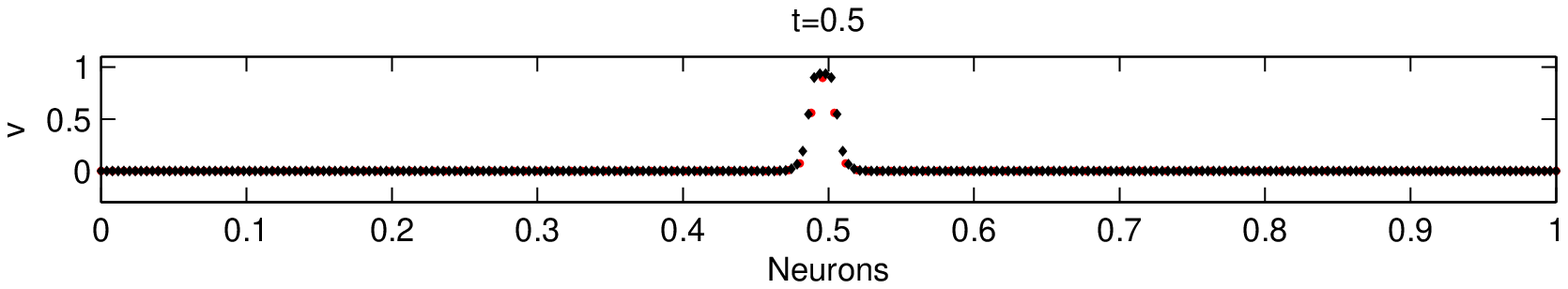}}\\\vspace{-0.65cm}
  \subfigure{\includegraphics[width=0.65\textwidth]{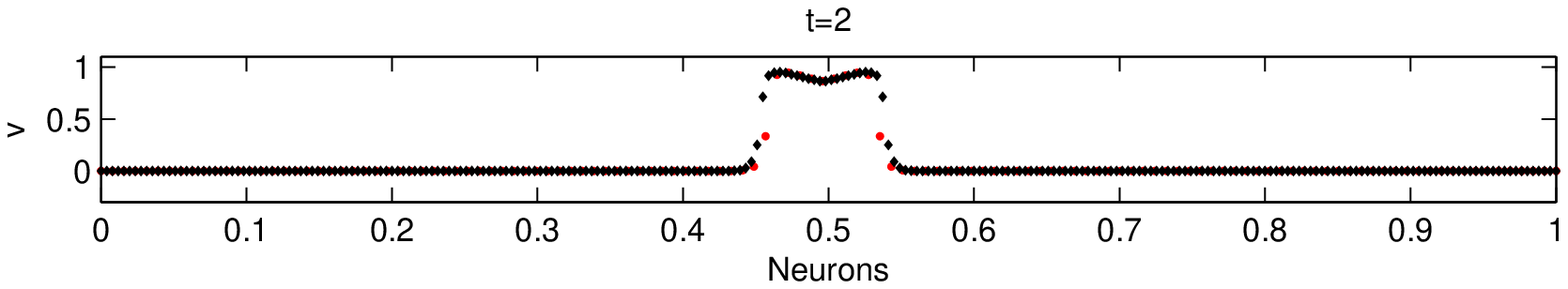}}\\\vspace{-0.65cm}
  \subfigure{\includegraphics[width=0.65\textwidth]{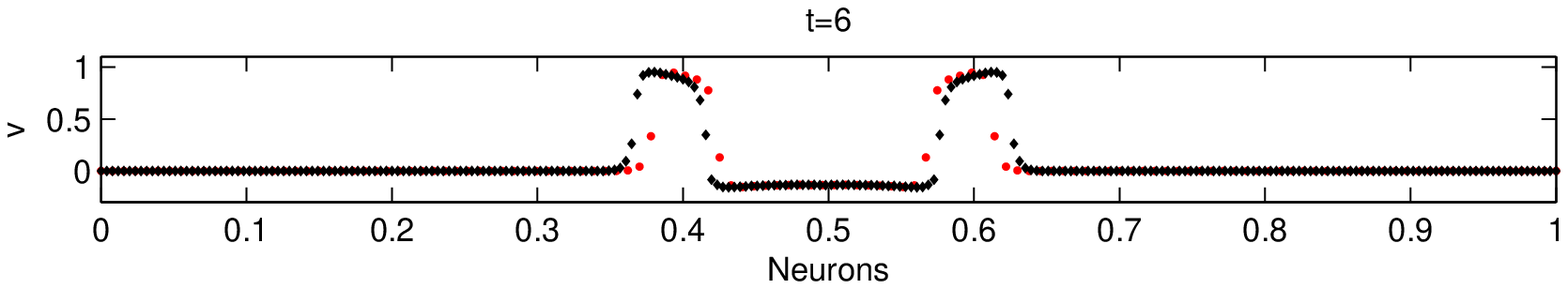}}\\\vspace{-0.65cm}
  \setcounter{subfigure}{0}
  \subfigure[]{\includegraphics[width=0.65\textwidth]{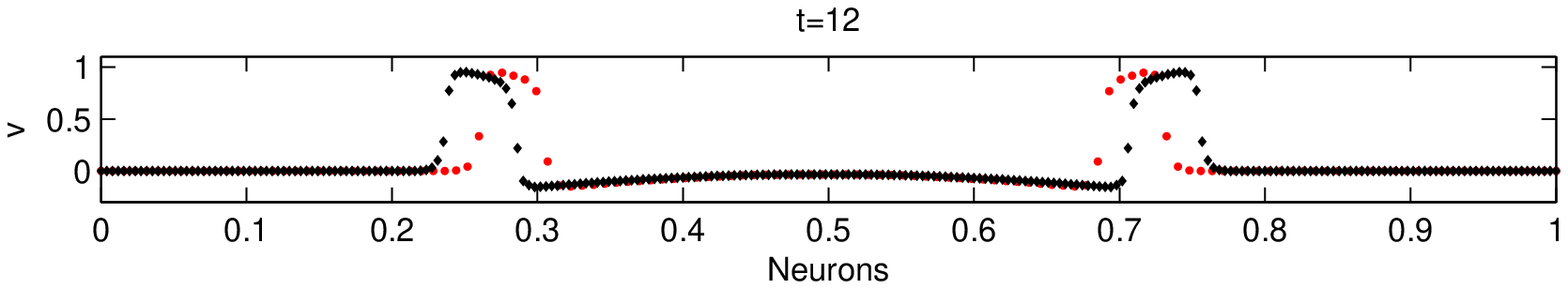}}
 %\caption{\label{Fig:ConfrS1_128_256} Comparison of dynamics obtained by considering $N=128$ (red dots) and $N=256$ (black dots). The considered approach is the first one.}
\end{minipage}
\begin{minipage}[]{\textwidth}
\centering
  \subfigure{\includegraphics[width=0.65\textwidth]{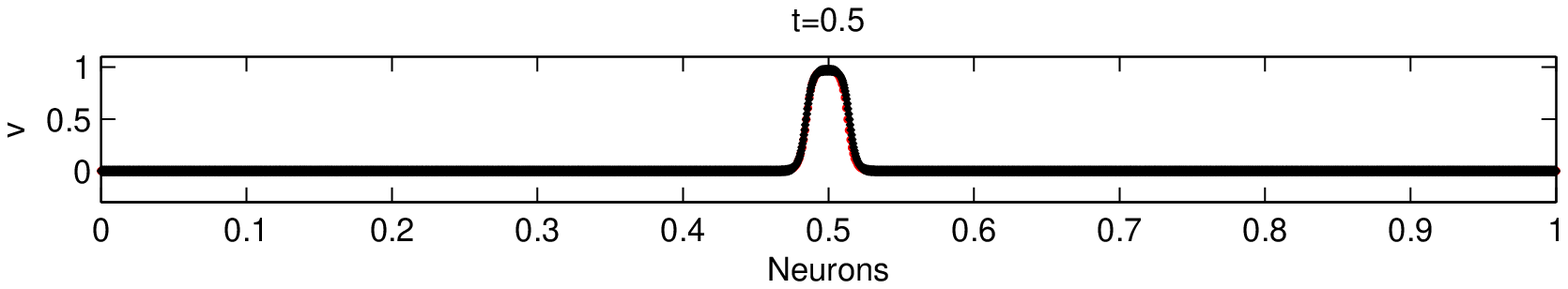}}\\\vspace{-0.65cm}
  \subfigure{\includegraphics[width=0.65\textwidth]{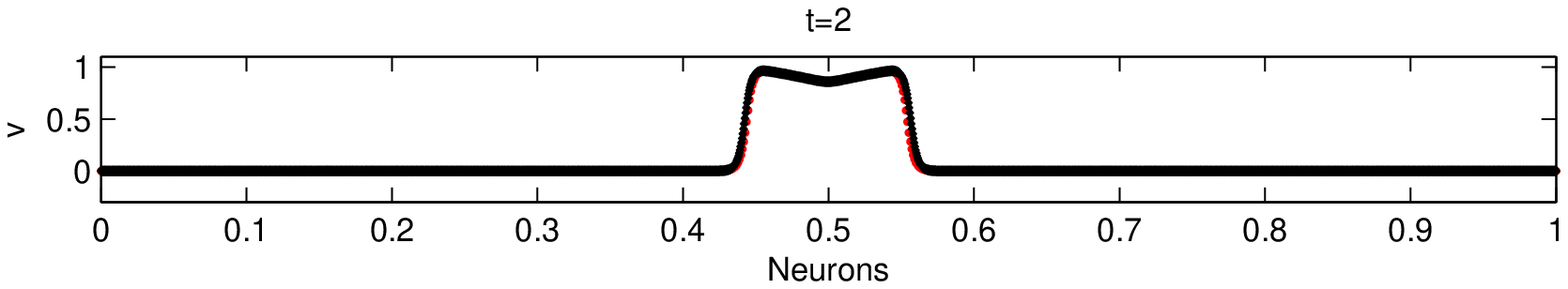}}\\\vspace{-0.65cm}
  \subfigure{\includegraphics[width=0.65\textwidth]{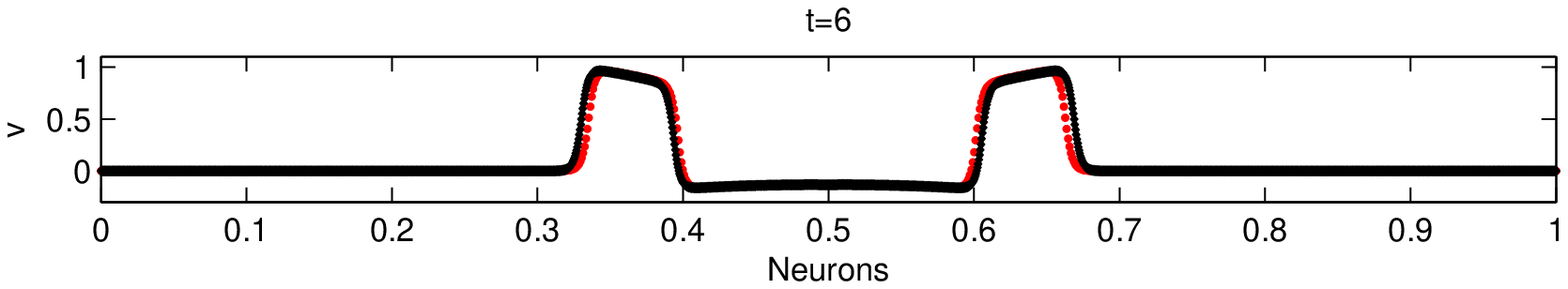}}\\\vspace{-0.65cm}
    \setcounter{subfigure}{1}
  \subfigure[]{\includegraphics[width=0.65\textwidth]{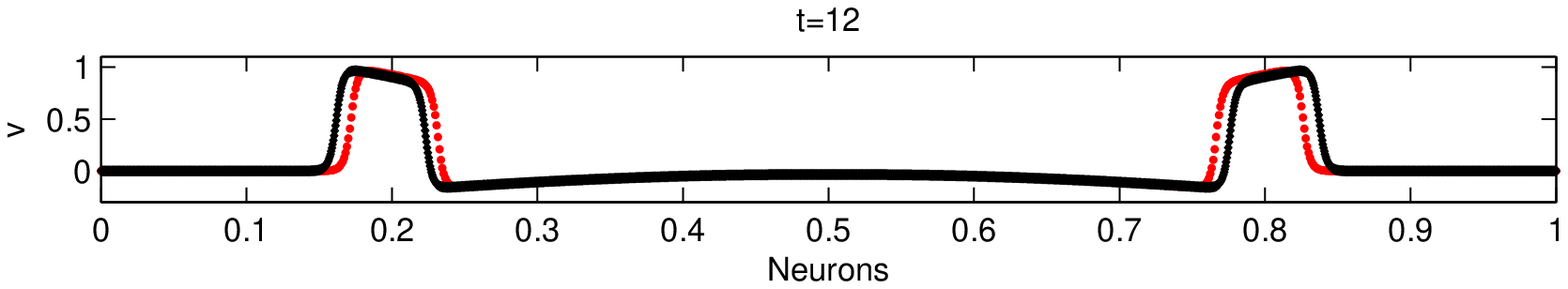}}
 %\caption{\label{Fig:ConfrS1_1024_2048} Comparison of dynamics obtained by considering $N=1024$ (red dots) and $N=2048$ (black dots). The considered approach is the first one.}
\end{minipage}
\caption{\label{ConfrS2} 
Convergence of the discrete model~\eqref{Eq:FHNiQ}-\eqref{eq:QN}-\eqref{Eq:FindQd} (Approach II) as $N \to \infty$. Evolution 
of pulses (a) for $N=128$ (red dots) and $N=256$ (black dots), (b) for $N=1024$ (red dots) and $N=2048$ (black dots)
}
\end{figure}

\begin{figure}
  \centering
  \subfigure{\includegraphics[width=0.65\textwidth]{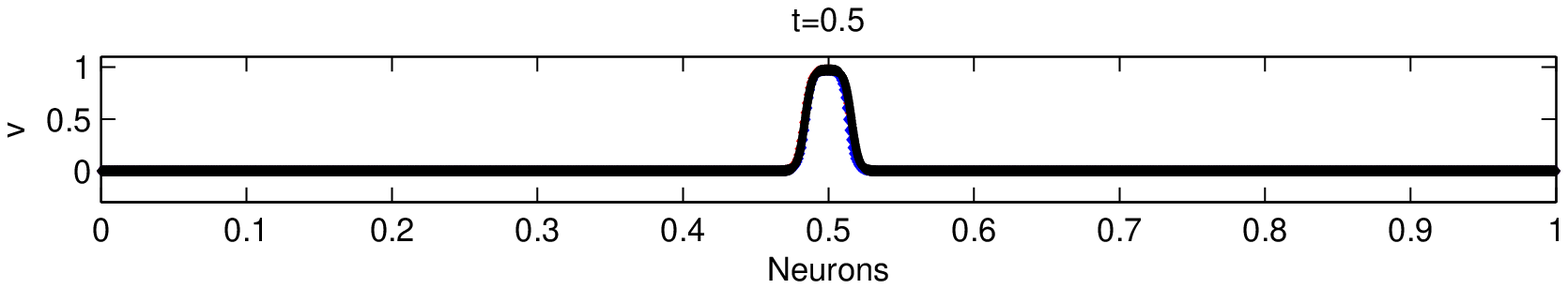}}\\\vspace{-0.65cm}
  \subfigure{\includegraphics[width=0.65\textwidth]{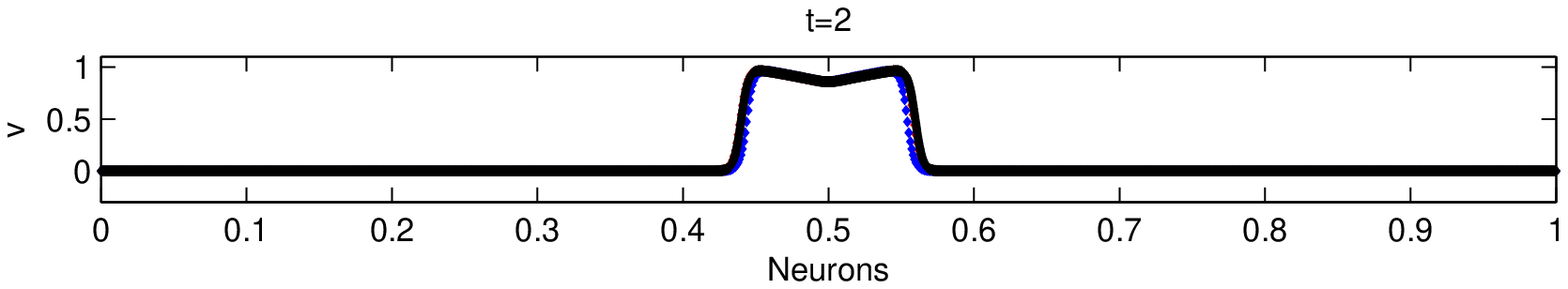}}\\\vspace{-0.65cm}
  \subfigure{\includegraphics[width=0.65\textwidth]{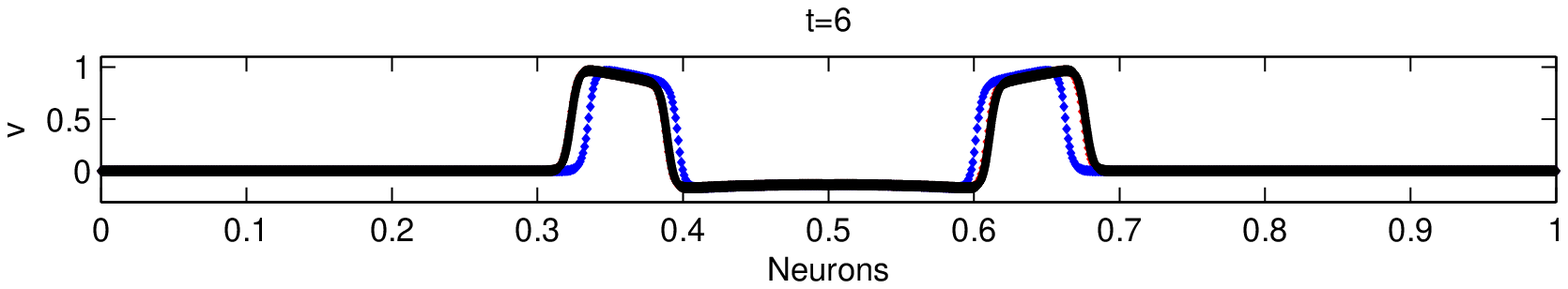}}\\\vspace{-0.65cm}
  \subfigure{\includegraphics[width=0.65\textwidth]{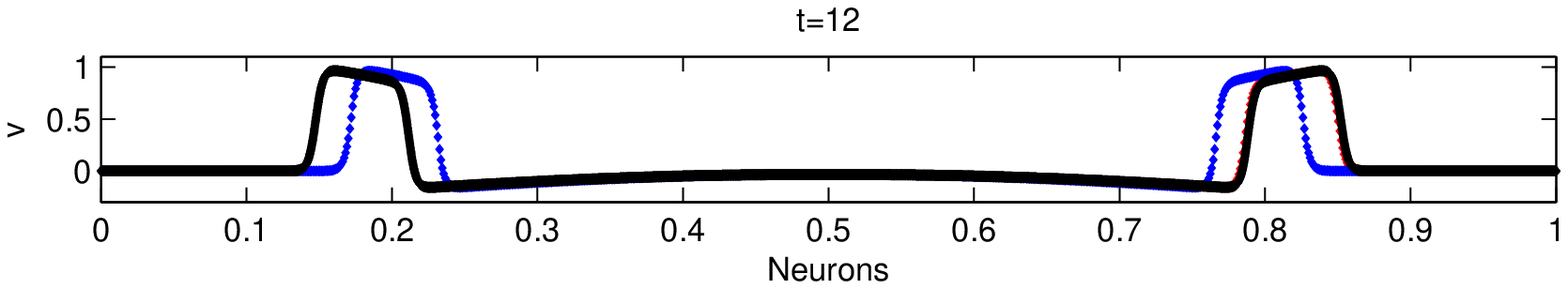}}
\caption{\label{ConfrS12_1024} Comparison of the dynamics produced by Approach II with $N=1024$ (blue dots)
and by the continuous model~\eqref{Eq:LimitSym} (black dots)
}
\end{figure}

\subsection{Non-symmetric interactions}
\label{Sec:NSI}
A more general configuration of the network admits non-symmetric links for each neuron, which correspond to unidirectional connections (the so-called rectifying synapses). A natural extension of the symmetric case consists in choosing 
\begin{equation}
\label{Eq:QAsym}
\mathcal{Q}=\mathcal{Q}_N=\mathcal{Q}_N^D\cup\mathcal{Q}_N^C\;,
\end{equation}
where
\begin{equation}
\nonumber
\mathcal{Q}_N^D=\{\pm 1,\,\cdots ,\,\pm Q_N^D\}\;,\quad
\mathcal{Q}_N^C=\{Q_N^D+1,\,\cdots ,\, Q_N^C\}\;.
\end{equation}
for some integers $Q_N^D\geq 1$ and $Q_N^C>Q_N^D$. (Choosing $-\mathcal{Q}_N^C$ instead of $\mathcal{Q}_N^C$ would be an obvious alternative.) We will prove that a suitable choice of $Q_N^C$ depending on $N$ leads to modify the limit model~\eqref{Eq:LimitSym}, by adding a first order term to the action potential equation.

With our definitions, the sum in the diffusive coupling becomes
\begin{equation}
\label{Eq:DiffCoupAsym}
\sum_{q\in\mathcal{Q}_N}(v_{i+q}-v_i)= \sum_{q=-Q_N^D}^{Q_N^D}(v_{i+q}-v_i) + \sum_{q=Q_N^D+1}^{Q_N^C}(v_{i+q}-v_i)\;.
\end{equation}
Exploiting the Taylor expansion~\eqref{Eq:Taylor}, we obtain
\begin{equation}
\begin{aligned}
&\sum_{q\in\mathcal{Q}_N}(v_{i+q}-v_i)= \\ 
&\left(\sum_{q=Q_N^D+1}^{Q_N^C}q \right)\Delta x v^\prime (x_i)+
\left(\sum_{q=1}^{Q_N^D} q^2 + \frac{1}{2}\sum_{q=Q_N^D+1}^{Q_N^C}q^2\right)\Delta x^2 v^{\prime\prime}(x_i)  +\mbox{h.o.t.}\;.%\mbox{h.o.t}
\end{aligned}
\label{Eq:AsymInitial}
\end{equation}
Recalling the definition~\eqref{Eq:phi} of the function $\varphi$, and introducing the function $\psi :\reali^{+}\rightarrow\reali^{+}$ defined as 
\begin{equation}
\label{Eq:varphi}
\psi (x)=\frac{x(x+1)}{2}
\end{equation}
and such that $\sum_{q=1}^{n} q=\psi(n)$, it is easily seen that the diffusive coupling takes the form
\begin{equation}
\label{Eq:Asymcase}
\begin{aligned}
-(L_Gv)_i=d&\left[(\psi(Q_N^C)-\psi(Q_N^D))\Delta x v^{\prime}(x_i)\right.\\
&\left. +\frac{1}{2} (\varphi(Q_N^D) +\varphi(Q_N^C))\Delta x^2 v^{\prime\prime}(x_i)+\mbox{h.o.t}\right]\;.
\end{aligned}
\end{equation}

Ideally, we would like to find integers $Q_N^D$ and $Q_N^C>Q_N^D$ satisfying the system
\begin{equation}
\label{Eq:AsymCompleteFirst}
\left\{\begin{array}{l}
\displaystyle \frac{1}{2} d(\phi(Q_N^D)+\phi(Q_N^C))\frac{1}{N^2}=d^\ast\\[10pt]
\displaystyle d(\psi (Q_N^C)-\psi (Q_N^D))\frac{1}{N}=c^\ast\;,
\end{array}\right.
\end{equation}
for fixed constants $d^\ast ,c^\ast >0$. At first, we discuss the existence of real solutions $Q_N^{D,r}$ and $Q_N^{C,r}$.
\begin{proposition}
Set $A_N=2\frac{d^\ast}{d}N^2$ and $B_N=\frac{c^\ast}{d}N$. If
\begin{equation}
\label{Eq:phipsi}
\varphi(\psi ^{-1}(B_N))\leq A_N\;,
\end{equation}
there exists a unique solution $(Q_N^{D,r},Q_N^{C,r})\in\reali_+^2$ of the previous system.
\end{proposition}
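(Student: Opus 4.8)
The plan is to reduce the $2\times 2$ system~\eqref{Eq:AsymCompleteFirst} to a single scalar equation and then apply a monotonicity-plus-intermediate-value argument. First I would rewrite the system in the compact form
\begin{equation}
\varphi(Q_N^D)+\varphi(Q_N^C)=A_N\;,\qquad \psi(Q_N^C)-\psi(Q_N^D)=B_N\;,
\end{equation}
with $A_N,B_N>0$, and record the elementary properties of $\varphi$ and $\psi$ that drive everything: both are $C^1$ and strictly increasing on $\reali_+$ (indeed $\varphi'(x)=(6x^2+6x+1)/6>0$ and $\psi'(x)=(2x+1)/2>0$), both vanish at $0$, and both tend to $+\infty$; hence each is a homeomorphism of $\reali_+$ onto itself, so in particular $\psi^{-1}$ is well defined, continuous and strictly increasing on $\reali_+$.

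Next I would eliminate $Q_N^C$. Since the second equation forces $\psi(Q_N^C)=\psi(Q_N^D)+B_N$ with $B_N>0$, one necessarily has $Q_N^C=\psi^{-1}\big(\psi(Q_N^D)+B_N\big)$, and strict monotonicity of $\psi$ then automatically gives $Q_N^C>Q_N^D\ge 0$, so the ordering constraint costs nothing. Substituting into the first equation, the system becomes equivalent to finding $t=Q_N^D\in\reali_+$ with $F(t)=A_N$, where
\begin{equation}
F(t):=\varphi(t)+\varphi\!\big(\psi^{-1}(\psi(t)+B_N)\big)\;,\qquad t\ge 0\;.
\end{equation}
I would then observe that $F$, being a sum of compositions of continuous strictly increasing functions, is itself continuous and strictly increasing on $\reali_+$; that $F(0)=\varphi\big(\psi^{-1}(B_N)\big)$; and that $F(t)\to+\infty$ as $t\to+\infty$. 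Strict monotonicity yields uniqueness; for existence, the hypothesis~\eqref{Eq:phipsi} says exactly $F(0)\le A_N$, so by the intermediate value theorem there is a unique $Q_N^{D,r}\ge 0$ with $F(Q_N^{D,r})=A_N$, and setting $Q_N^{C,r}:=\psi^{-1}\big(\psi(Q_N^{D,r})+B_N\big)$ produces the desired unique pair in $\reali_+^2$, with $Q_N^{C,r}>Q_N^{D,r}$.

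I do not expect a genuine obstacle here; the argument is essentially bookkeeping. The one point deserving a little care is checking that nothing is lost in the elimination step, i.e. that the reduced map $F$ is everywhere defined on $\reali_+$ (which needs $\psi^{-1}$ defined on all of $\reali_+$) and has range exactly $[\varphi(\psi^{-1}(B_N)),+\infty)$, and recognising that condition~\eqref{Eq:phipsi} is precisely the assertion that $A_N$ belongs to that range. Everything else (positivity of the indices and their strict ordering) then follows for free from $B_N>0$ together with the monotonicity of $\psi$.
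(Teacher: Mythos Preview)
Your proof is correct and follows essentially the same route as the paper: eliminate $Q_N^C$ via $Q_N^C=\psi^{-1}(\psi(Q_N^D)+B_N)$, define the scalar function $F(t)=\varphi(t)+\varphi(\psi^{-1}(\psi(t)+B_N))$ (the paper calls it $\chi$), and observe that it is a strictly increasing continuous map of $[0,+\infty)$ onto $[\varphi(\psi^{-1}(B_N)),+\infty)$, so that~\eqref{Eq:phipsi} is exactly the condition for $A_N$ to lie in its range. The only cosmetic difference is that you spell out the derivatives of $\varphi$ and $\psi$ and the ordering $Q_N^{C,r}>Q_N^{D,r}$ explicitly, which the paper leaves implicit.
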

\begin{proof}
For simplicity, let us set $\hat{x}=Q_N^{D,r}$ and $\hat{y}=Q_N^{C,r}$. They should satisfy
\begin{equation}
\label{Eq:Cond}
\left\{
\begin{array}{l}
\displaystyle\varphi(\hat{x})+\varphi(\hat{y})=A_N\\[10pt]
\displaystyle\psi(\hat{y})-\psi(\hat{x})=B_N\;.
\end{array}
\right.
\end{equation}
Recalling that both $\varphi$ and $\psi$ are strictly increasing bijections from $[0,+\infty )$ into itself, the second equation yields
\begin{displaymath}
\hat{y}=\psi ^{-1}(\psi(\hat{x})+B_N)\;,
\end{displaymath}
which, substituted into the first equation, yields
\begin{equation}
\label{Eq:psiics}
\varphi(\hat{x})+\varphi(\eta(\hat{x}))=A_N\;,
\end{equation}
with $\eta(\hat{x}):=\psi^{-1}(\psi(\hat{x})+B_N)$. Now the function $\chi=\varphi+\varphi\circ\eta$ is again strictly increasing, and maps $[0,+\infty)$ into $[\chi(0),+\infty)=[\varphi(\psi^{-1}(B_N)),+\infty)$. Thus, condition~\eqref{Eq:phipsi} is equivalent to the existence of a unique solution of~\eqref{Eq:psiics}, whence the result.
\end{proof}
We observe that, given any arbitrary $d^\ast$ and $c^\ast$, there always exists an integer $N^\ast$ such that condition~\eqref{Eq:phipsi} is satisfied for all $N\geq N^\ast$.

\begin{definition}
\label{Def:QdQc}
Under the assumption~\eqref{Eq:phipsi}, we define $Q_N^{D}$ and $Q_N^{C}$, resp., to be the nearest integers to $Q_N^{D,r}$ and $Q_N^{C,r}$, resp., which are the unique solutions of the system
\begin{equation}
\label{Eq:AsymComplete}
\left\{\begin{array}{l}
\displaystyle \frac{1}{2} d(\phi(Q_N^{D,r})+\phi(Q_N^{C,r}))\frac{1}{N^2}=d^\ast\\[10pt]
\displaystyle d(\psi (Q_N^{C,r})-\psi (Q_N^{D,r}))\frac{1}{N}=c^\ast\;.
\end{array}\right.
\end{equation}
\end{definition}

\begin{proposition}
\label{Prop:QdQcLaw}
The following asymptotic behaviour of the integers $Q_N^D$ and $Q_N^C$ holds:
\begin{equation}
\nonumber
Q_N^D\simeq N^\frac{2}{3},\qquad Q_N^C\simeq N^\frac{2}{3}\quad\mbox{with}\quad Q_N^C-Q_N^D\simeq N^\frac{1}{3}.
\end{equation}
\end{proposition}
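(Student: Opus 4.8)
The plan is to establish the claimed orders of magnitude first for the \emph{real} solutions $\hat x:=Q_N^{D,r}$ and $\hat y:=Q_N^{C,r}$ of the system~\eqref{Eq:Cond}, and then transfer them to the nearest integers $Q_N^D,Q_N^C$ at the cost of an $O(1)$ error, which is harmless because all quantities involved diverge as $N\to\infty$. Throughout I would use the elementary two-sided bounds $\tfrac13 x^3\le\varphi(x)\le x^3$ and $\tfrac12 x^2\le\psi(x)\le x^2$ for $x\ge 1$, immediate from~\eqref{Eq:phi} and~\eqref{Eq:varphi}, together with $A_N=2\tfrac{d^\ast}{d}N^2\simeq N^2$ and $B_N=\tfrac{c^\ast}{d}N\simeq N$.

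First I would locate $\hat y$. Since $\psi$ is strictly increasing, the second equation in~\eqref{Eq:Cond} forces $\hat y>\hat x$, hence $\varphi(\hat x)<\varphi(\hat y)$ and, from the first equation, $A_N<2\varphi(\hat y)$, i.e. $\varphi(\hat y)\gtrsim N^2$ and so $\hat y\gtrsim N^{2/3}$; on the other hand $\varphi(\hat y)\le A_N$ gives $\hat y\lesssim N^{2/3}$. Thus $\hat y\simeq N^{2/3}$, and the same first equation gives $\hat x\lesssim N^{2/3}$ as well. Next I would bootstrap through the second equation to get the lower bound on $\hat x$: writing $\psi(\hat x)=\psi(\hat y)-B_N$ and using $\psi(\hat y)\simeq\hat y^2\simeq N^{4/3}$ while $B_N\simeq N=o(N^{4/3})$, for $N$ large one has $\psi(\hat x)\ge\tfrac12\psi(\hat y)\gtrsim N^{4/3}$, whence $\hat x\gtrsim N^{2/3}$ and therefore $\hat x\simeq N^{2/3}$. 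For the gap I would use the exact identity $\psi(\hat y)-\psi(\hat x)=\tfrac12(\hat y-\hat x)(\hat x+\hat y+1)$, so that $\hat y-\hat x=2B_N/(\hat x+\hat y+1)$; since $\hat x+\hat y\simeq N^{2/3}$ and $B_N\simeq N$, this yields $\hat y-\hat x\simeq N^{1/3}$.

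Finally, since $Q_N^D$ and $Q_N^C$ are the nearest integers to $\hat x$ and $\hat y$ (Definition~\ref{Def:QdQc}), we have $|Q_N^D-\hat x|\le\tfrac12$ and $|Q_N^C-\hat y|\le\tfrac12$, hence also $|(Q_N^C-Q_N^D)-(\hat y-\hat x)|\le 1$. Because $\hat x,\hat y\simeq N^{2/3}\to\infty$ and $\hat y-\hat x\simeq N^{1/3}\to\infty$, the rounding perturbs each of these quantities and their difference by an amount that is asymptotically negligible relative to its size; this gives $Q_N^D\simeq N^{2/3}$, $Q_N^C\simeq N^{2/3}$ and $Q_N^C-Q_N^D\simeq N^{1/3}$ (in particular $Q_N^C>Q_N^D$ and $Q_N^D\ge 1$ for $N$ large, so the configuration is admissible). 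The only genuinely delicate point is the lower bound for $\hat x$: one must exploit that the right-hand side $B_N\simeq N$ of the ``convective'' equation is of strictly smaller order than the natural scale $\psi(\hat y)\simeq N^{4/3}$ of its left-hand side, which prevents $\hat x$ from being of smaller order than $N^{2/3}$; the remaining estimates are routine manipulations of $\varphi$ and $\psi$.
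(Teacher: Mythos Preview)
Your proof is correct and follows essentially the same route as the paper's: first pin down $\hat y\simeq N^{2/3}$ from $\varphi(\hat y)\le A_N\le 2\varphi(\hat y)$, then get $\hat x\simeq N^{2/3}$ from $\psi(\hat x)=\psi(\hat y)-B_N$ with $B_N=o(N^{4/3})$, and finally extract $\hat y-\hat x\simeq N^{1/3}$ from the second equation. The only cosmetic differences are that the paper invokes the Mean Value Theorem for $\psi$ to obtain the gap (you use the exact factorization $\psi(\hat y)-\psi(\hat x)=\tfrac12(\hat y-\hat x)(\hat x+\hat y+1)$, which is equivalent since $\psi$ is quadratic), and that you spell out the harmless passage from $(\hat x,\hat y)$ to the rounded integers, which the paper leaves implicit.
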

\begin{proof}
It is enough to estimate $\hat{x}=Q_N^{D,r}$ and $\hat{y}=Q_N^{C,r}$. We recall that they satisfy~\eqref{Eq:Cond}. With the ansatz $\hat{y}\simeq N^\alpha$, we have $\varphi(\hat{y})\simeq N^{3\alpha}$. On the other hand, the inequality $\hat{x}<\hat{y}$ and the monotonicity of $\varphi$ yield $\varphi(\hat{y})\leq \varphi(\hat{x})+\varphi(\hat{y})\leq 2\varphi(\hat{y})$. Since $A_N\simeq N^2$, we deduce that $\varphi(\hat{y})\simeq N^2$, whence $\alpha=2/3$. On the other hand, $\psi(\hat{y})\simeq N^\frac{4}{3}$ so that $\psi(\hat{x})=\psi(\hat{y})+B_N\simeq N^\frac{4}{3}+N\simeq N^\frac{4}{3}$, which implies $\hat{x}\simeq N^\frac{2}{3}$. Finally, by Lagrange's theorem, 
\begin{equation}
\nonumber
N\simeq B_N=\psi(\hat{y})-\psi(\hat{x})=\psi^{\prime}(\hat{z})(\hat{y}-\hat{x})
\end{equation}
for some $\hat{x}<\hat{z}<\hat{y}$; since $\psi^\prime (\hat{z})=\hat{z}+1/2\simeq N^\frac{2}{3}$, we conclude that $\hat{y}-\hat{x}\simeq N^\frac{1}{3}$.
\end{proof}
Even for the present model, interactions are local. Indeed, all neurons linked to the $i-$th one belong to the interval
\begin{equation}
\nonumber
I=[x_i-Q_N^D\Delta x,x_i+Q_N^C\Delta x]\;,
\end{equation}
whose length shrinks to $0$ as $N\rightarrow\infty$ since $Q_N^D\Delta x,\,Q_N^C\Delta x\simeq N^{-\frac{1}{3}}$.

In order to accommodate the effect of the slight shift from $(Q_N^{D,r},Q_N^{C,r})$ to $(Q_N^{D},Q_N^{C})$, we introduce perturbations $(d_N^\ast,c_N^\ast)$ of $(d^\ast,c^\ast)$. They are defined in such a way that $(Q_N^{D},Q_N^{C})$ is the solution of the system
\begin{equation}
\label{Eq:AsymCompleteStar}
\left\{\begin{array}{l}
\displaystyle \frac{1}{2} d(\phi(Q_N^D)+\phi(Q_N^C))\frac{1}{N^2}=d_N^\ast\\[10pt]
\displaystyle d(\psi (Q_N^C)-\psi (Q_N^D))\frac{1}{N}=c_N^\ast\;.
\end{array}\right.
\end{equation}
The size of the perturbation can be estimated as follows.
\begin{proposition}
\label{Prop:ddastccast}
The perturbed coefficients $d_N^\ast$ and $c_N^\ast$ introduced above satisfy
\begin{equation}
\nonumber
|d_N^\ast -d^\ast|\lesssim N^{-\frac{2}{3}},\quad |c_N^\ast -c^\ast|\lesssim N^{-\frac{1}{3}}\;.
\end{equation}
\end{proposition}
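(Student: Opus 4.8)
The plan is to mimic exactly the strategy used in the proof of Proposition~\ref{PropdNd}. Subtracting the two defining systems~\eqref{Eq:AsymComplete} (satisfied by the real solutions $Q_N^{D,r},Q_N^{C,r}$) and~\eqref{Eq:AsymCompleteStar} (satisfied by the integers $Q_N^{D},Q_N^{C}$), I obtain
\begin{equation}
\nonumber
d_N^\ast-d^\ast=\frac{d}{2N^2}\bigl[(\phi(Q_N^D)-\phi(Q_N^{D,r}))+(\phi(Q_N^C)-\phi(Q_N^{C,r}))\bigr]
\end{equation}
and
\begin{equation}
\nonumber
c_N^\ast-c^\ast=\frac{d}{N}\bigl[(\psi(Q_N^C)-\psi(Q_N^{C,r}))-(\psi(Q_N^D)-\psi(Q_N^{D,r}))\bigr]\;.
\end{equation}
So everything reduces to bounding the increments of $\phi$ and $\psi$ across a shift of size at most $1/2$, exactly as before.

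Next I would invoke the rounding bounds $|Q_N^{D,r}-Q_N^D|\leq\tfrac12$ and $|Q_N^{C,r}-Q_N^C|\leq\tfrac12$, together with the asymptotics from Proposition~\ref{Prop:QdQcLaw}, namely $Q_N^D\simeq Q_N^C\simeq N^{2/3}$. By the mean value theorem, $|\phi(Q_N^D)-\phi(Q_N^{D,r})|\leq\tfrac12\max\phi'\lesssim (N^{2/3})^2=N^{4/3}$ on the relevant interval (since $\phi'(x)\simeq x^2$), and likewise for the $Q_N^C$ term; dividing by $N^2$ gives $|d_N^\ast-d^\ast|\lesssim N^{4/3-2}=N^{-2/3}$. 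For the convection coefficient, $|\psi(Q_N^C)-\psi(Q_N^{C,r})|\leq\tfrac12\max\psi'\lesssim N^{2/3}$ (since $\psi'(x)=x+\tfrac12\simeq N^{2/3}$), and similarly for the $Q_N^D$ term; dividing by $N$ yields $|c_N^\ast-c^\ast|\lesssim N^{2/3-1}=N^{-1/3}$. Both are the claimed rates.

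The only genuine point requiring care — the "main obstacle", though a mild one — is that the mean value estimates involve $\phi'$ and $\psi'$ evaluated at points lying between $Q_N^{D,r}$ and $Q_N^D$ (resp.\ the $C$ pair), so one must be sure these intermediate points are themselves $\simeq N^{2/3}$; this is immediate because they are sandwiched between two quantities that are $\simeq N^{2/3}$ by Proposition~\ref{Prop:QdQcLaw} and differ by at most $1/2$. A secondary subtlety is that, a priori, the differences $\phi(Q_N^D)-\phi(Q_N^{D,r})$ and $\phi(Q_N^C)-\phi(Q_N^{C,r})$ could partially cancel or reinforce in the sum; since we only need an upper bound, the triangle inequality suffices and no sign information is required. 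Everything else is the routine computation already signalled by the phrase "omitting computations" in the proof of Proposition~\ref{PropdNd}, so I would simply state the bounds and refer back to that argument.
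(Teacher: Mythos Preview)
Your proposal is correct and follows essentially the same approach as the paper: subtract the two defining systems to express $d_N^\ast-d^\ast$ and $c_N^\ast-c^\ast$ as differences of $\varphi$ and $\psi$ across shifts of at most $1/2$, then use the growth rates $\varphi'\simeq x^2$, $\psi'\simeq x$ together with $Q_N^{D},Q_N^{C}\simeq N^{2/3}$ from Proposition~\ref{Prop:QdQcLaw} to get $N^{4/3}$ and $N^{2/3}$ bounds on the increments, and divide by $N^2$ and $N$ respectively. The paper presents exactly this, invoking the argument of Proposition~\ref{PropdNd} for the increment bounds; your additional remarks on the location of the mean-value points and on the irrelevance of cancellation are correct but not needed beyond what the paper already takes for granted.
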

\begin{proof}
Using~\eqref{Eq:AsymComplete} and~\eqref{Eq:AsymCompleteStar}, we get
\begin{equation}
\nonumber
\begin{aligned}
d_N^\ast -d^\ast &= \frac{d}{2N^2}\left[(\varphi (Q_N^D)-\varphi (Q_N^{D,r}))+(\varphi (Q_N^C)-\varphi (Q_N^{C,r})) \right]\;,\\
c_N^\ast -c^\ast &= \frac{d}{N}\left[(\psi (Q_N^D)-\psi(Q_N^{D,r}))-(\psi(Q_N^C)-\psi(Q_N^{C,r})) \right]\;.
\end{aligned}
\end{equation}
As in the proof of Proposition~\ref{PropdNd}, we have $|\varphi (Q_N^D)-\varphi (Q_N^{D,r})|\lesssim N^{\frac{4}{3}}$, $|\varphi (Q_N^C)-\varphi (Q_N^{C,r})|\lesssim N^{\frac{4}{3}}$, and $|\psi (Q_N^D)-\psi (Q_N^{D,r})|\lesssim N^{\frac{2}{3}}$, $|\psi (Q_N^C)-\psi (Q_N^{C,r})|\lesssim N^{\frac{2}{3}}$, which gives the result.
\end{proof}

Finally, we study the limit behaviour of our model as $N\rightarrow\infty$. To this end, we make use of the following expression for the higher order terms in~\eqref{Eq:Asymcase}:
\begin{equation}
\mbox{h.o.t.}=\frac{1}{12}\sum_{q=1}^{Q_N^D}q^4\Delta x^4 v^{(iv)}(\bar{x}_{i,q})+\frac{1}{6}\sum_{q=Q_N^D+1}^{Q_N^C}q^3 \Delta x^3 v^{\prime\prime\prime}(\bar{\bar{x}}_{i,q})\;,
\end{equation}
which holds under the assumption that the fourth derivative of $v$ is continuous in $[0,1]$, for suitable points $\bar{x}_{i,q}\in (x_{i-q},x_{i+q})$ and $\bar{\bar{x}}_{i,q}\in (x_{i},x_{i+q})$.  Then, we observe that
\begin{equation}
\nonumber
\sum_{q=1}^{Q_N^D}q^4\Delta x^4\simeq (Q_N^D)^5\Delta x^4\simeq N^{\frac{10}{3}-4}=N^{-\frac{2}{3}}
\end{equation}
and
\begin{equation}
\nonumber
\sum_{q=Q_N^D+1}^{Q_N^C}q^3 \Delta x^3 \simeq \left[(Q_N^C)^4-(Q_N^D)^4 \right]\Delta x^3\simeq N^{\frac{7}{3}-3}=N^{-\frac{2}{3}}\;.
\end{equation}
Thus, we obtain the following result.
\begin{theorem}
Fix any point $\hat{x}\in [0,1]$ and for each $N$, consider a neuron $i=i(N)$ such that $x_{i(N)}\rightarrow \hat{x}$ as $N\rightarrow\infty$. Assuming the continuity of the fourth derivative of $v$ in $[0,1]$, we have
\begin{equation}
\nonumber
-(L_Gv)_{i(N)}\rightarrow d^\ast v''(\hat{x})+c^\ast v'(\hat{x})\quad\mbox{as }N\rightarrow\infty\;.
\end{equation}
Therefore, the discrete model~\eqref{Eq:FHNiQ} with $\mathcal{Q}$ given by~\eqref{Eq:QAsym} and $Q_N^D,\,Q_N^C$ defined in Definition~\ref{Def:QdQc} leads for $N\rightarrow\infty$ to the continuous model 
\begin{equation}
\begin{array}{l}
\displaystyle\frac{\partial v}{\partial t}= f(v,r)+d^\ast\frac{\partial ^2 v}{\partial x^2}+c^\ast\frac{\partial v}{\partial x}\;,\\[10pt]
\displaystyle\frac{\partial r}{\partial t}=g(v,r)\;,
\end{array}
\label{Eq:LimitEqAsym}
\end{equation}
which describes the behaviour of a continuum of neurons disposed along a closed ring.
\begin{flushright}
$\square$
\end{flushright}
\end{theorem}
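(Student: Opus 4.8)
The plan is to read off the limit directly from the Taylor expansion \eqref{Eq:Asymcase} of the discrete Laplacian, after normalising its coefficients by means of the defining relations \eqref{Eq:AsymCompleteStar}. Concretely, since $\Delta x = 1/N$ by \eqref{Eq:Mesh}, the two equations in \eqref{Eq:AsymCompleteStar} say exactly that $d\,(\psi(Q_N^C)-\psi(Q_N^D))\,\Delta x = c_N^\ast$ and $\tfrac12 d\,(\varphi(Q_N^D)+\varphi(Q_N^C))\,\Delta x^2 = d_N^\ast$, i.e.\ the coefficients multiplying $v'(x_i)$ and $v''(x_i)$ in \eqref{Eq:Asymcase} are precisely $c_N^\ast$ and $d_N^\ast$. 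So I would first rewrite
\[
-(L_Gv)_i = c_N^\ast\, v'(x_i) + d_N^\ast\, v''(x_i) + d\cdot\mathrm{h.o.t.},
\]
with $\mathrm{h.o.t.}$ given by the representation of the higher-order terms displayed just before the statement, and then pass to the limit in the three summands separately, taking $i = i(N)$.

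For the two leading terms this is immediate: Proposition~\ref{Prop:ddastccast} gives $d_N^\ast \to d^\ast$ and $c_N^\ast \to c^\ast$, while the assumption that $v$ has a continuous fourth derivative on $[0,1]$ makes $v'$ and $v''$ continuous, so $x_{i(N)} \to \hat x$ forces $v'(x_{i(N)}) \to v'(\hat x)$ and $v''(x_{i(N)}) \to v''(\hat x)$. Hence the sum of the first two terms tends to $d^\ast v''(\hat x) + c^\ast v'(\hat x)$, which is the desired limit; everything then hinges on showing that the remainder vanishes.

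To handle the remainder I would use that the evaluation points $\bar x_{i,q}$ and $\bar{\bar x}_{i,q}$ all lie in $[0,1]$, so that $|v^{(iv)}(\bar x_{i,q})| \le \|v^{(iv)}\|_{L^\infty(0,1)}$ and $|v'''(\bar{\bar x}_{i,q})| \le \|v'''\|_{L^\infty(0,1)}$, both finite by continuity on a compact interval — this is exactly where the $C^4$ hypothesis is spent. What is left are the two sums $\sum_{q=1}^{Q_N^D} q^4 \Delta x^4$ and $\sum_{q=Q_N^D+1}^{Q_N^C} q^3 \Delta x^3$, which were already shown, with the help of Proposition~\ref{Prop:QdQcLaw}, to be $\simeq N^{-2/3}$ just before the statement; therefore $d\cdot\mathrm{h.o.t.} \lesssim N^{-2/3} \to 0$. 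Combining this with the previous paragraph proves $-(L_Gv)_{i(N)} \to d^\ast v''(\hat x) + c^\ast v'(\hat x)$, and substituting into the discrete system \eqref{Eq:FHNiQ} — with $\mathcal{Q}$ as in \eqref{Eq:QAsym} and $Q_N^D, Q_N^C$ from Definition~\ref{Def:QdQc} — and arguing as in Sections~\ref{NNI} and~\ref{Sec:Extended1D} yields the PDE system \eqref{Eq:LimitEqAsym}.

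I do not expect a serious obstacle here: by the time this statement is reached, Propositions~\ref{Prop:QdQcLaw} and~\ref{Prop:ddastccast} have already done the real work, and what remains is the uniform control of the derivative factors in the remainder described above. The one point deserving care is that the genuinely asymmetric block $\mathcal{Q}_N^C$ contributes a third-order Taylor remainder whose size is governed by $(Q_N^C)^4-(Q_N^D)^4$, and obtaining the $N^{-2/3}$ rate — rather than the cruder $N^{-1/3}$ that a naive bound by $(Q_N^C)^4$ alone would give — requires the estimate $Q_N^C - Q_N^D \simeq N^{1/3}$ from Proposition~\ref{Prop:QdQcLaw}; this slower decay, relative to Approach~I, is precisely what the subsequent text ties to the less accurate reproduction of the pulse speed. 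Finally, as in the symmetric cases, ``convergence'' is understood here as the pointwise convergence, for every fixed $\hat x$, of the discrete Laplacian applied to a prescribed smooth $v$; turning this into convergence of the ODE trajectories to the solution of \eqref{Eq:LimitEqAsym} would call for additional stability arguments that are not attempted.
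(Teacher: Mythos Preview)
Your proposal is correct and follows exactly the approach of the paper: the theorem is stated there with a terminal $\square$ because all the ingredients have already been laid out in the preceding paragraphs, and you have assembled them in precisely the intended way---rewriting the coefficients in \eqref{Eq:Asymcase} as $c_N^\ast$ and $d_N^\ast$ via \eqref{Eq:AsymCompleteStar}, invoking Proposition~\ref{Prop:ddastccast} for their convergence, and using the two $N^{-2/3}$ estimates on the remainder sums together with the $C^4$ bound on $v$.
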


\begin{remark}\label{rem:comments1}
A few comments are in order.
\begin{itemize}
\item[i)] Observe that having a larger number of neurons influencing a given neuron from its right rather than from
its left results in a convective term, whose coefficient $c^\ast$ is positive; this corresponds to a negative speed of
convective propagation, i.e., waves moving from right to left, as documented by Fig.~\ref{ConfrAL}. Obviously,
choosing $c^\ast=0$ yields ${\mathcal Q}_N^C=\emptyset$, so one is back to the symmetric case considered in Sect. ~\ref{Sec:Extended1D}.
\item[ii)] The same limit model can be obtained with a nearest-neighbour interaction that extends the one considered
in Sect.~\ref{NNI}, i.e.,  
\begin{equation}
-(L_Gv)_i=d_N[(v_{i+1}-v_i)+(v_{i-1}-v_i)] + c_N (v_{i+2}-v_i)\;,
\label{Eq:DiffCoupMod}
\end{equation}
with $d_N=d^\ast N^2$ and $c_N=c^\ast \frac{N}2$.
\item[iii)] A generalization to variable coefficients $d^\ast$ and $c^\ast$ similar to the one discussed in 
Remark~\ref{rem:var-coeff} is also possible, yielding the two last terms on the right-hand side of~\eqref{Eq:LimitEqAsym}
being replaced by the conservation form $\frac{\partial}{\partial x}\left(d^\ast\frac{\partial v}{\partial x} \right)+ \frac{\partial}{\partial x}\left(c^\ast v \right)$.
\end{itemize}
\end{remark}

We now provide some quantitative insights for our model. Extending the test case considered in the previous subsection, we choose $d=0.05$ and we enforce that for $N=N_0=128$, we have $Q_{N_0}^D=Q_{N_0}^{D,r}=1$ and $Q_{N_0}^C=Q_{N_0}^{C,r}=2$, i.e., each neuron is influenced by its first neighbour on the left and by the two first neighbours on the right. Using~\eqref{Eq:AsymComplete}, we obtain
\begin{equation}
\begin{aligned}
d^\ast &= \frac{3\cdot 0.05}{128^2}=9.1553\cdot 10^{-6}\;,\\
c^\ast &= \frac{2\cdot 0.05}{128}=7.8125\cdot 10^{-4}\;.
\end{aligned}
\end{equation}
Then, we increase $N$ by powers of $2$ and we monitor the evolution of the quantities $Q_N^D$ and $Q_N^C$, as well as the errors $d_N^\ast -d^\ast$ and $c_N^\ast -c^\ast$. The results, reported in Table~\ref{Tab:QDQC}, indicate an excellent agreement with the theoretical predictions given in Propositions~\ref{Prop:QdQcLaw}--\ref{Prop:ddastccast}. The evolutions of the action potentials produced by the discrete model with $N=512$ and $N=2048$, and by a very accurate solution of the continuous model~\eqref{Eq:LimitEqAsym} are documented in Figure~\ref{ConfrAL}.

\begin{figure}
  \centering
  \subfigure{\includegraphics[width=0.65\textwidth]{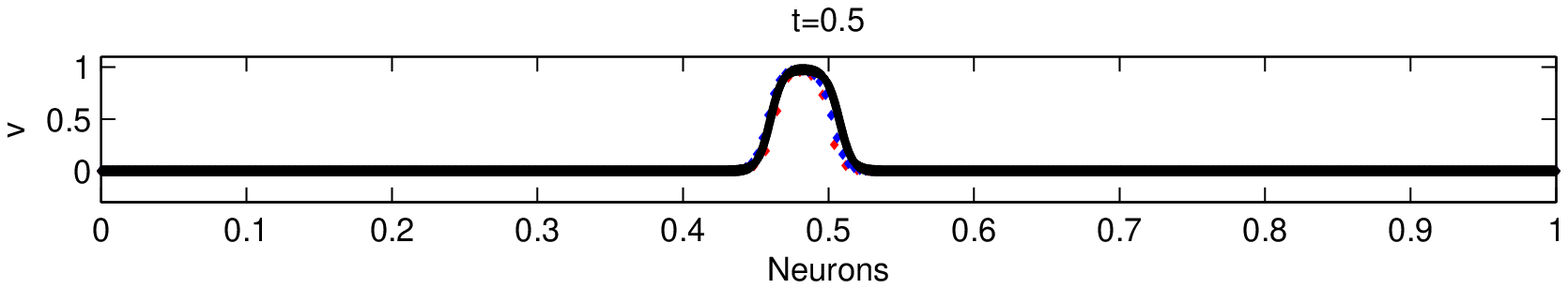}}\\\vspace{-0.65cm}
  \subfigure{\includegraphics[width=0.65\textwidth]{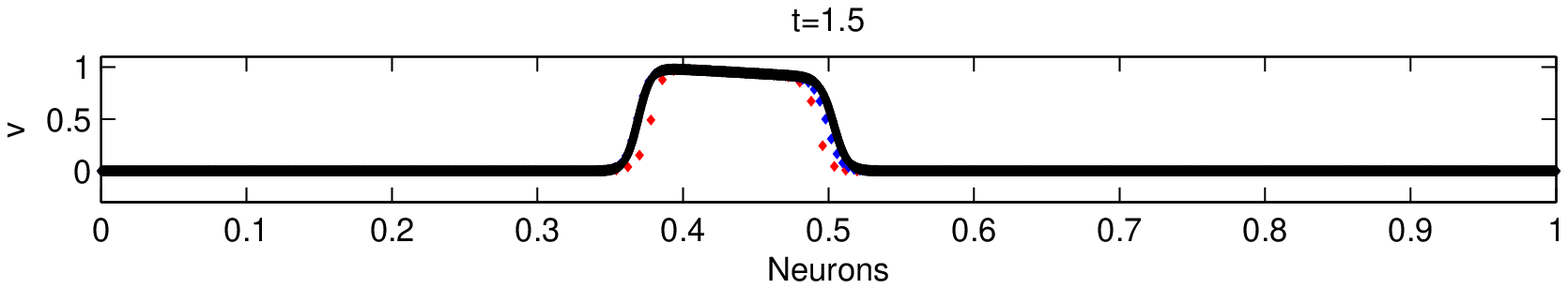}}\\\vspace{-0.65cm}
  \subfigure{\includegraphics[width=0.65\textwidth]{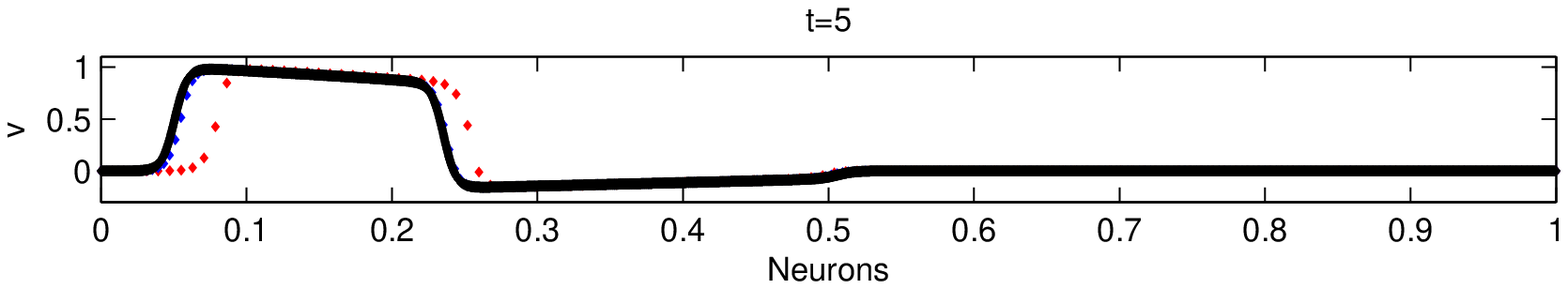}}\\\vspace{-0.65cm}
  \subfigure{\includegraphics[width=0.65\textwidth]{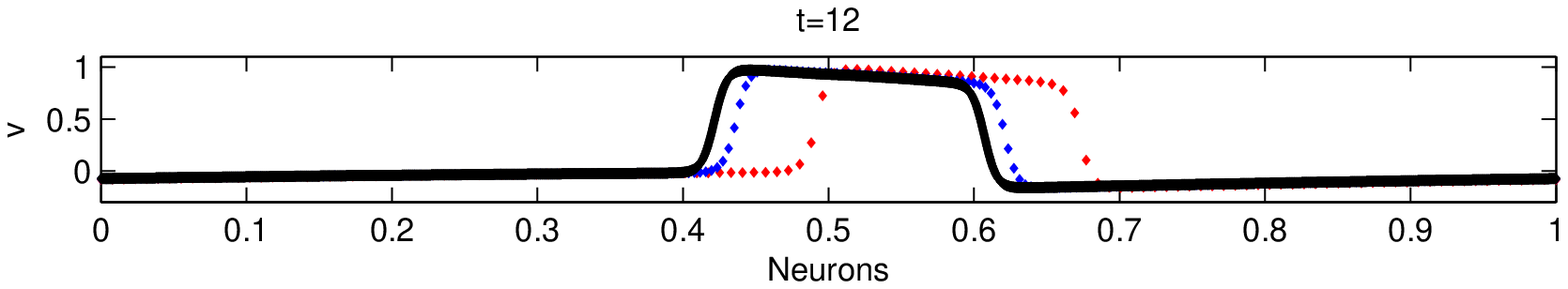}}
\caption{\label{ConfrAL}
Convergence of the discrete model~\eqref{Eq:FHNiQ}-\eqref{Eq:QAsym}-\eqref{Eq:AsymComplete} as $N \to \infty$. Evolution 
of a pulse  for $N=128$ (red dots), $N=256$ (blue dots) and $N=8192$ (black dots)}
\end{figure}

\begin{table}[t!]\footnotesize
\centering
\caption{Number of connections per neuron $Q_D$, $Q_C$, convection coefficient $\tilde{c}_N$ and diffusion coefficient $\tilde{d}_N$ as a function of $N$ are shown.}
%{\relsize{-1.5}
\begin{tabular}{rrrrlclc}
$p$ & $N=N=2^p$& $Q_N^D$ & $Q_N^C$ & $N_0 d_N^\ast$ & $|d_N^\ast-d_N|/d_N$ & $N_0 c_N^\ast$ & $|c_N^\ast-c_N|/c_N$\\
\hline
$0$ & $128$ \rule{0pt}{2.5ex} \rule[-1ex]{0pt}{0pt} & 1 & 2 & 0.1500 & 0& 0.1000 &0 \\
\hline
$1$ & $256$ \rule{0pt}{2.5ex} \rule[-1ex]{0pt}{0pt} & 2 & 3 & 0.1188 & 2.1$\cdot 10^{-1}$ & 0.0750 &  2.5$\cdot 10^{-1}$\\
\hline
$2$ & $512$ \rule{0pt}{2.5ex} \rule[-1ex]{0pt}{0pt} & 4 & 5 & 0.1328 & 1.1$\cdot 10^{-1}$ & 0.0625 & 3.75$\cdot 10^{-1}$ \\
\hline
$3$ & $1024$ \rule{0pt}{2.5ex} \rule[-1ex]{0pt}{0pt} & 7 & 9 & 0.1660 & 1.1$\cdot 10^{-1}$ & 0.1063 & 6.25$\cdot 10^{-2}$ \\
\hline
$4$ & $2048$ \rule{0pt}{2.5ex} \rule[-1ex]{0pt}{0pt} & 11 & 14 & 0.1485 & 9.8$\cdot 10^{-3}$ & 0.1219 & 2.2$\cdot 10^{-1}$\\
\hline
$5$ & $4096$ \rule{0pt}{2.5ex} \rule[-1ex]{0pt}{0pt} & 19 & 22 & 0.1524 & 2.0$\cdot 10^{-2}$ & 0.0984 & 1.6$\cdot 10^{-2}$\\
\hline
$6$ & $8192$ \rule{0pt}{2.5ex} \rule[-1ex]{0pt}{0pt} & 31 & 35 & 0.1546  & 3.0$\cdot 10^{-2}$ &  0.1047 & 4.7$\cdot 10^{-2}$\\
\hline
$7$ & $16384$ \rule{0pt}{2.5ex} \rule[-1ex]{0pt}{0pt} & 50 & 55 & 0.1524 & 1.6$\cdot 10^{-2}$ & 0.1035 & 3.5$\cdot 10^{-2}$ \\
\hline
$8$ & $32768$ \rule{0pt}{2.5ex} \rule[-1ex]{0pt}{0pt} & 80 & 86 & 0.1486 & 9.2$\cdot 10^{-3}$ & 0.0979 & 2.1$\cdot 10^{-2}$ \\
\hline
$9$ & $65536$ \rule{0pt}{2.5ex} \rule[-1ex]{0pt}{0pt} & 129 & 136  & 0.1499 & 7.6$\cdot 10^{-4}$ & 0.0909 & 9.1$\cdot 10^{-2}$\\
\hline
$10$ & $131072$ \rule{0pt}{2.5ex} \rule[-1ex]{0pt}{0pt} & 206 & 216 & 0.1506 & 4.2$\cdot 10^{-3}$ & 0.1033 & 3.3$\cdot 10^{-2}$\\
\hline
$11$ & $262144$ \rule{0pt}{2.5ex} \rule[-1ex]{0pt}{0pt} & 329 & 341 & 0.1502 & 1.4$\cdot 10^{-3}$ & 0.0983 & 1.7$\cdot 10^{-2}$\\
\hline
$12$ & $524288$ \rule{0pt}{2.5ex} \rule[-1ex]{0pt}{0pt} & 524 &  540 & 0.1501 & 6.7$\cdot 10^{-4}$ & 0.1040 & 4.0$\cdot 10^{-2}$\\
\hline
$13$ & $1048576$ \rule{0pt}{2.5ex} \rule[-1ex]{0pt}{0pt} & 835 & 854 & 0.1499 & 6.6$\cdot 10^{-4}$ & 0.0980 & 2.0$\cdot 10^{-2}$\\
\hline
$14$ & $2097152$ \rule{0pt}{2.5ex} \rule[-1ex]{0pt}{0pt} & 1329 & 1353 & 0.1499 & 4.7$\cdot 10^{-4}$ & 0.0982 & 1.7$\cdot 10^{-2}$\\
\hline
$15$ & $4194304$ \rule{0pt}{2.5ex} \rule[-1ex]{0pt}{0pt} & 2114 & 2145 & 0.1500 & 1.5$\cdot 10^{-4}$ & 0.1008 & 7.5$\cdot 10^{-3}$\\
\hline
$16$ & $8388608$ \rule{0pt}{2.5ex} \rule[-1ex]{0pt}{0pt} & 3361 & 3400 & 0.1499 & 5.4$\cdot 10^{-5}$ & 0.1006 & 6.0$\cdot 10^{-2}$\\
\hline
$17$ & $16777216$ \rule{0pt}{2.5ex} \rule[-1ex]{0pt}{0pt} & 5342 & 5391 & 0.1499 & 9.3$\cdot 10^{-5}$ & 0.1003 & 3.2$\cdot 10^{-3}$\\
\hline
$18$ & $33554432$ \rule{0pt}{2.5ex} \rule[-1ex]{0pt}{0pt} & 8489 & 8550 & 0.1500 & 3.0$\cdot 10^{-5}$  & 0.0991 & 8.7$\cdot 10^{-3}$\\
\hline
$19$ & $67108864$ \rule{0pt}{2.5ex} \rule[-1ex]{0pt}{0pt} &  13485 & 13563 & 0.1500 & 1.8$\cdot 10^{-5}$ & 0.1006 & 6.0$\cdot 10^{-3}$\\
\hline
$20$ & $134217728$ \rule{0pt}{2.5ex} \rule[-1ex]{0pt}{0pt} & 21420 & 21517 & 0.1500 & 5.5$\cdot 10^{-7}$ &0.0993 & 7.0$\cdot 10^{-3}$ \\
\end{tabular}
%}
\label{Tab:QDQC}
\end{table}

\section{Multi-dimensional dynamics}
In this section, we extend the previous one-dimensional treatment, and in particular the material of Section~\ref{Sec:NSI}, to describe the dynamics of a multi-dimensional agglomeration of neurons. We will focus on the main aspects of the analysis, leaving to the reader those details that are straightforward extensions of the one-dimensional results.

We assume that neurons form a periodic lattice contained in $B=[0,1]^m$, $m=2$ or $m=3$. Precisely, given any integer $n\geq 2$ and setting $h=1/n$, each neuron is associated to a multi-index $l\in\{0,\cdots ,n-1\}^m$, which identifies its physical position $x=hl\in B$. Thus we have $N=n^m$ distinct neurons in $B$, which are labelled by indices $i\in\{1,\cdots ,N\}$ according to some rule; the $i-$th neurons has position $x_i=hl_i$, action potential $v_i$ and recovery variable $r_i$. Periodicity means that we replicate the situation at $x=hl$ in any $y=h(l+nk)$ with $k\in\interi ^m$.

We adopt again the diffusion model~\eqref{Eq:FHNiQ}, with $\mathcal{Q}$ given by~\eqref{Eq:QAsym}. The definition of $\mathcal{Q}_N^D$ and $\mathcal{Q}_N^C$ is as follows:
\begin{itemize}
\item given a radius $R_N^D:=hQ_N^D$ with $Q_N^D>0$ (to be determined later on), we set 
\begin{equation}
\label{eq:def-QND}
\mathcal{Q}_N^D :=\{q:\| x_{i+q}-x_i\|\leq R_N^D\}\;;
\end{equation}
\item given a radius $R_N^C:=hQ_N^C$ with $Q_N^C \geq Q_N^D$ (to be determined later on), and a unit vector $\nu\in\reali^m$, we set 
\begin{equation}
\label{eq:def-QNC}
\mathcal{Q}_N^C:=\{q:R_N^D<\| x_{i+q}-x_i\|\leq R_N^C\mbox{ and }(x_{i+q}-x_i)\cdot\nu\geq 0\}\;,
\end{equation}
i.e., $\mathcal{Q}_N^C$ identifies neurons sitting on semi-balls of suitable radii centered at $x_i$; these semi-balls are obtained by cutting the corresponding balls by the hyperplane containing $x_i$ and perpendicular to $\nu$, and retaining the halves oriented in the direction of $\nu$ (see Figure~\ref{Fig:Conf2D} for a  pictorial representation of the sets $\mathcal{Q}_N^D$ and $\mathcal{Q}_N^C$ in two dimensions).
\begin{figure}
\centering
\includegraphics[width=0.4\textwidth]{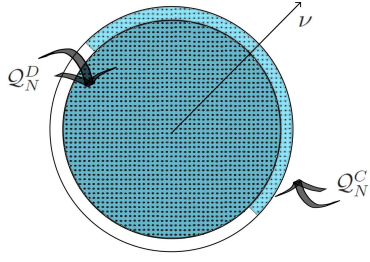}
\caption{\label{Fig:Conf2D} The sets $\mathcal{Q}_N^D$ and $\mathcal{Q}_N^C$  represented in a two-dimensional lattice}
\end{figure}
\end{itemize}
\subsection*{ The effect of $\mathcal{Q}_N^D$ on the diffusion term $-(L_Gv)_i$}
Observe that $q\in\mathcal{Q}_N^D$ iff $x_{i+q}=hl_{i+q}=h(l_i+k)$ for some $k\in\mathcal{K}_N^D:=\{k\in\interi ^m : \|k\|\leq Q_N^D\}$.
Thus, recalling~\eqref{Eq:TaylorGen}, we have
\begin{equation}
\label{Eq:DiffCoupQd}
\sum_{q\in\mathcal{Q}_N^D} (v_{i+q}-v_i)=\sum_{k\in\mathcal{K}_N^D}h \, k \cdot \! \nabla v(x_i)+\frac{1}{2}h^2k^T Hv(x_i)k+\mbox{h.o.t.}\;.
\end{equation}
Now, writing
\begin{equation}
\nonumber
k^THv(x_i)k=\sum_{\alpha =1}^m k_{\alpha}^2 D_{\alpha\alpha}^2 v(x_i)+\sum_{\substack{\alpha ,\beta=1\\ \alpha\neq\beta}}^m k_\alpha k_\beta D_{\alpha\beta}^2v(x_i)\;,
\end{equation}
we get
\begin{equation}
\nonumber
\sum_{k\in\mathcal{K}_N^D} k^THv(x_i)k=\sum_{\alpha =1}^m\left(\sum_{k\in\mathcal{K}_N^D} k_\alpha ^2\right)D_{\alpha\alpha}^2 v(x_i)+\sum_{\substack{\alpha ,\beta=1\\ \alpha\neq\beta}}^m \left(\sum_{k\in\mathcal{K}_N^D} k_\alpha k_\beta   \right) D_{\alpha\beta}^2v(x_i)\;.
\end{equation}
Now, it is easily seen that by the form of $\mathcal{K}_N^D$, the quantity
\begin{equation}
\nonumber
\varphi (Q_N^D):=\sum_{k\in\mathcal{K}_N^D} k_\alpha ^2,\qquad\mbox{with }\alpha =1,\cdots ,m
\end{equation}
is independent of $\alpha$, whereas
\begin{equation}
\sum_{k\in\mathcal{K}_N^D}k=0,\qquad \sum_{k\in\mathcal{K}_N^D} k_\alpha k_\beta=0
\quad \mbox{ if }\alpha\neq\beta\;,
\end{equation}
since vectors in $\mathcal{K}_N^D$ can be arranged in couples that are symmetric with respect to each coordinate hyperplane.
Thus, 
\begin{equation}
\label{Eq:sumQD}
\sum_{q\in\mathcal{Q}_N^D}(v_{i+q}-v_i)=\frac{1}{2}h^2\varphi(Q_N^D)\Delta v(x_i)+\mbox{h.o.t.}\;,
\end{equation}
where $\Delta v=\sum_{\alpha =1}^m D_{\alpha\alpha}^2v$ is the Laplacian of the function $v$. We observe for further reference that for any $Q>0$, denoting by $\mathcal{B}(0,Q)$ the ball of center $0$ and radius $Q$ in $\reali ^m$, one has for any given $\alpha =1,\cdots ,m$
\begin{equation}
\varphi(Q)=\sum_{\|k\|\leq Q}k_\alpha ^2\sim \int_{\mathcal{B}(0,Q)}y_\alpha ^2\d y\simeq Q^{2+m}\quad\mbox{as }Q\rightarrow\infty\;.
\end{equation}

\subsection*{The effect of $\mathcal{Q}_N^C$ on the diffusion term $-(L_Gv)_i$}
Now, $q\in\mathcal{Q}_N^C$ iff $x_{i+q}=h(l_i+k)$ for some 
$k\in\mathcal{K}_N^C:=\{k\in\interi ^m:Q_N^D< \|k\| \leq Q_N^C\mbox{ and }k\cdot \nu>0\}$.
At this point, we assume that $\nu =e_1$, the first element of the canonical basis in $\reali ^m$; this choice is not at all restrictive, but simplifies the following arguments. Indeed, referring to the analogue of~\eqref{Eq:DiffCoupQd} in which $\mathcal{Q}_N^D,\,\mathcal{K}_N^D$ resp., are replaced by $\mathcal{Q}_N^C,\,\mathcal{K}_N^C$ resp., we have
\begin{equation}
\nonumber
\sum_{k\in\mathcal{K}_N^C}k\cdot \! \nabla v(x_i)=\left(\sum_{k\in\mathcal{K}_N^C} k_1\right)\frac{\partial v}{\partial x_1}(x_i)=\left(\psi(Q_N^C)-\psi(Q_N^D)\right)\frac{\partial v}{\partial x_1}(x_i)\;,
\end{equation}
with
\begin{equation}
\nonumber
\psi(Q):=\sum_{\substack{\|k\|\leq Q\\k_1\geq 0}}k_1\sim \int_{\mathcal{B}(0,Q)\cap\{y_1\geq 0\}}y_1\d y\simeq Q^{1+m}\quad\mbox{as }Q\rightarrow\infty\;.
\end{equation}
On the other hand,
\begin{equation}
\nonumber
\sum_{k\in\mathcal{K}_N^C}k^THv(x_i)k=\sum_{\alpha =1}^m\left(\sum_{k\in\mathcal{K}_N^C} k_\alpha ^2 \right)D_{\alpha\alpha}^2v(x_i)\;.
\end{equation}
But now, 
\begin{equation}
\nonumber
\sum_{k\in\mathcal{K}_N^C}k_\alpha ^2=\frac{1}{2}\sum_{Q_N^D<\| k\|\leq Q_N^C}k_\alpha ^2=\frac{1}{2}\left(\varphi(Q_N^C)-\varphi(Q_N^D)\right)\;.
\end{equation}
We conclude that, going back to the case of an arbitrary $\nu$,
\begin{equation}
\label{Eq:sumQC}
\begin{aligned}
\sum_{q\in\mathcal{Q}_N^C}(v_{i+q}-v_i)&= d\left[h\left(\psi(Q_N^C)-\psi(Q_N^D)\right)\nu\cdot\nabla v(x_i)\right.\\
&\left. +\frac{1}{4}h^2\left(\varphi(Q_N^C)-\varphi(Q_N^D)\right)\Delta v(x_i)+\mbox{h.o.t.}\right]\;.
\end{aligned}
\end{equation}
\subsection*{The global effect of $\mathcal{Q}_N^C$}
Summing up~\eqref{Eq:sumQD} and~\eqref{Eq:sumQC}, we obtain
\begin{equation}
\nonumber
\begin{aligned}
-(L_Gv)_i=d&h\left(\psi(Q_N^C)-\psi(Q_N^D)\right)\nu\cdot \! \nabla v(x_i)\\
&+\frac{d}{4}h^2\left(\varphi(Q_N^D)+\varphi(Q_N^C)\right)\Delta v(x_i)+\mbox{h.o.t.}\;.
\end{aligned}
\end{equation}
At this point, given two constants $d^\ast >0$ and $c^\ast \geq 0$, we would like to find $Q_N^D>0$ and $Q_N^C\geq Q_N^D$ such that
\begin{equation}
\label{Eq:System2D}
\left\{\begin{array}{l}
\displaystyle d\frac{h^2}{4}(\phi(Q_N^D)+\phi(Q_N^C))=d^\ast\\[10pt]
\displaystyle dh(\psi (Q_N^C)-\psi (Q_N^D))=c^\ast\;.
\end{array}\right.
\end{equation}
This system is similar to~\eqref{Eq:AsymCompleteFirst} and we can discuss its solvability as done in Section~\ref{Sec:NSI}. The conclusion is that for $N$ large enough, the solution exists and satisfies 
\begin{equation}
\nonumber
Q_N^D\simeq Q_N^C\simeq N^{\frac{2}{m(m+2)}}\quad\mbox{and}\quad Q_N^C-Q_N^D\simeq c^\ast N^{\frac{2-m}{m(m+2)}}\;,
\end{equation}
whereas the number of neurons that should be connected to a given neuron scales like $N^\frac{2}{m+2}$.
We summarize our conclusions as follows.

\begin{theorem}\label{theo:multiD}
The discrete model~\eqref{Eq:FHNiQ}, with ${\mathcal Q}$ given by~\eqref{Eq:QAsym}-\eqref{eq:def-QND}-\eqref{eq:def-QNC}
in which $Q_N^D$ and $Q_N^C$ are the solution of~\eqref{Eq:System2D},
tends for $N\rightarrow\infty$ to the following continuous model of reaction-convection-diffusion type
\begin{equation}
\begin{array}{l}
\displaystyle\frac{\partial v}{\partial t}= f(v,r)+d^\ast\Delta v+\hat{c}^\ast\cdot\nabla v\;,\\[10pt]
\displaystyle\frac{\partial r}{\partial t}=g(v,r)\;,
\end{array}
\label{Eq:LimitEqAsym2D}
\end{equation}
where the convective velocity is given by the vector $\hat{c}^\ast=c^\ast\nu$. 
\end{theorem}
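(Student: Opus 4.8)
The plan is to transcribe, step by step, the one-dimensional argument of Section~\ref{Sec:NSI} to the $m$-dimensional lattice, the Taylor machinery being already in place. The decomposition of the diffusive coupling obtained just before the statement,
\[
-(L_Gv)_i = dh\bigl(\psi(Q_N^C)-\psi(Q_N^D)\bigr)\,\nu\cdot\nabla v(x_i)+\frac{d}{4}h^2\bigl(\varphi(Q_N^D)+\varphi(Q_N^C)\bigr)\Delta v(x_i)+\text{h.o.t.}\,,
\]
is already in the desired form, so what is left is: (i) to solve system~\eqref{Eq:System2D} and fix the asymptotic sizes of the radii; (ii) to pass from the real solution to the integer radii used in~\eqref{eq:def-QND}--\eqref{eq:def-QNC}, controlling the perturbation of the constants; (iii) to show the higher order terms vanish, and conclude. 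Throughout I would keep $v\in C^4$ (the standing smoothness assumption) and use $h=N^{-1/m}$, the volume asymptotics $\varphi(Q)\simeq Q^{m+2}$, $\psi(Q)\simeq Q^{m+1}$, and the count $|\mathcal{K}_N^D|\simeq(Q_N^D)^m$ recorded above.

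For (i), with $A_N=\frac{4d^\ast}{d}h^{-2}\simeq N^{2/m}$ and $B_N=\frac{c^\ast}{d}h^{-1}\simeq N^{1/m}$, the system reads $\varphi(\hat x)+\varphi(\hat y)=A_N$, $\psi(\hat y)-\psi(\hat x)=B_N$; eliminating $\hat y=\psi^{-1}(\psi(\hat x)+B_N)$ leaves one strictly increasing equation in $\hat x$ with range $[\varphi(\psi^{-1}(B_N)),\infty)$, whence a unique real solution exists once $\varphi(\psi^{-1}(B_N))\le A_N$, and the latter holds for all large $N$ because its left side is $\simeq N^{(m+2)/(m(m+1))}=o(N^{2/m})$. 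The orders of $Q_N^{D,r},Q_N^{C,r}$ follow as in Proposition~\ref{Prop:QdQcLaw}: $\varphi(\hat y)\le A_N\le 2\varphi(\hat y)$ gives $\hat y\simeq N^{2/(m(m+2))}$; $\psi(\hat x)=\psi(\hat y)-B_N\simeq\psi(\hat y)$ gives $\hat x\simeq N^{2/(m(m+2))}$; and $B_N=\psi'(\hat z)(\hat y-\hat x)$ with $\psi'(\hat z)\simeq\hat z^{\,m}\simeq N^{2/(m+2)}$ gives $\hat y-\hat x\simeq N^{(2-m)/(m(m+2))}$; locality is immediate since $R_N^D,R_N^C\simeq hN^{2/(m(m+2))}\simeq N^{-1/(m+2)}\to 0$. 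For (ii), taking $Q_N^D,Q_N^C$ to be the nearest integers to $Q_N^{D,r},Q_N^{C,r}$ and letting $d_N^\ast,c_N^\ast$ be the constants realizing~\eqref{Eq:System2D} with those integers, the proofs of Propositions~\ref{PropdNd} and~\ref{Prop:ddastccast} go through: using $|Q_N^{D,r}-Q_N^{D}|,|Q_N^{C,r}-Q_N^{C}|\le\frac12$, the mean value theorem, $\varphi'(Q)\simeq Q^{m+1}$ and $\psi'(Q)\simeq Q^m$, one gets $|d_N^\ast-d^\ast|\lesssim h^2N^{2(m+1)/(m(m+2))}=N^{-2/(m(m+2))}\to 0$, and the corresponding control of $|c_N^\ast-c^\ast|$ is the one genuinely more delicate point, since varying only the integer radii moves $\sum_{k\in\mathcal{Q}_N^C}k_1$ by jumps of size $\simeq N^{2/(m+2)}$, so one may have to adjust $\mathcal{Q}_N^C$ by a few lattice points near its outer boundary to hit the prescribed $c^\ast$.

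The substantive part is (iii). For $k\in\mathcal{K}_N^D$, Taylor-expanding $v(x_i+hk)-v(x_i)$ with Lagrange remainder and using that $\mathcal{K}_N^D$ is invariant under reflecting any coordinate, every monomial $k^\gamma$ of odd degree --- the cubic ones included --- has zero sum over $\mathcal{K}_N^D$, and the off-diagonal quadratic sums $\sum k_\alpha k_\beta$, $\alpha\neq\beta$, vanish too, so only $\frac12h^2\varphi(Q_N^D)\Delta v(x_i)$ remains, up to a remainder bounded by $Ch^4\|v\|_{C^4}\sum_{\mathcal{K}_N^D}\|k\|^4\lesssim h^4(Q_N^D)^{m+4}\simeq N^{-2/(m+2)}$. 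For $k$ in the half-shell $\mathcal{K}_N^C$ the reflection in the $\nu$-direction is broken, so a first order term $h(\psi(Q_N^C)-\psi(Q_N^D))\nu\cdot\nabla v(x_i)$ survives; the off-diagonal quadratic sums still vanish ($\mathcal{K}_N^C$ being symmetric under reflecting any coordinate other than $\nu$, while the surviving pair $k_1k_\alpha$, $\alpha\neq1$, is odd in $k_\alpha$), the diagonal ones give $\frac14h^2(\varphi(Q_N^C)-\varphi(Q_N^D))\Delta v(x_i)$, and the remainder is bounded by $Ch^3\|v\|_{C^3}\sum_{\mathcal{K}_N^C}\|k\|^3\lesssim h^3(Q_N^C)^{m+3}\simeq N^{-1/(m+2)}$; both remainders go to $0$ uniformly in $i$. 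Then, by construction, $dh(\psi(Q_N^C)-\psi(Q_N^D))=c_N^\ast\to c^\ast$ and $\frac{d}{4}h^2(\varphi(Q_N^D)+\varphi(Q_N^C))=d_N^\ast\to d^\ast$, while $\varphi(Q_N^D)/(\varphi(Q_N^D)+\varphi(Q_N^C))\to\frac12$ since $Q_N^D/Q_N^C\to1$; so for $i=i(N)$ with $x_{i(N)}\to\hat x$, continuity of $\nabla v$ and $\Delta v$ gives $-(L_Gv)_{i(N)}\to d^\ast\Delta v(\hat x)+c^\ast\,\nu\cdot\nabla v(\hat x)$, which inserted in~\eqref{Eq:FHNiQ} produces~\eqref{Eq:LimitEqAsym2D} with $\hat c^\ast=c^\ast\nu$.

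The hard part is the multi-index bookkeeping in step (iii): one must correctly isolate the surviving sums on the half-shell $\mathcal{K}_N^C$ (only $\sum k_1$ at first order and $\sum k_\alpha^2$ at second order; all contributions carrying a factor $k_\alpha$ with $\alpha\neq1$ cancel by symmetry) and then verify that each surviving error term carries a strictly negative power of $N$ --- the exponent arithmetic $h^4(Q_N^D)^{m+4}\simeq N^{-2/(m+2)}$ and $h^3(Q_N^C)^{m+3}\simeq N^{-1/(m+2)}$ being exactly what forces the scaling $Q_N^D\simeq Q_N^C\simeq N^{2/(m(m+2))}$. The other delicate point is the rounding of the convective radius noted above; everything else is a direct transposition of Section~\ref{Sec:NSI}, the only new analytic ingredient being the lattice/integral comparison $\sum_{\|k\|\le Q}k_\alpha^2\sim\int_{\mathcal{B}(0,Q)}y_\alpha^2\,\d{y}$ recorded before the statement.
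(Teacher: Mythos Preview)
Your proposal is correct and follows exactly the paper's approach: the paper's ``proof'' of Theorem~\ref{theo:multiD} is in fact the derivation in the paragraphs preceding it (the splitting into $\mathcal{K}_N^D$ and $\mathcal{K}_N^C$, the symmetry cancellations yielding~\eqref{Eq:sumQD} and~\eqref{Eq:sumQC}, and the reduction to system~\eqref{Eq:System2D}), with the h.o.t.\ bounds and the solvability/scaling analysis explicitly left to the reader as ``straightforward extensions'' of Section~\ref{Sec:NSI}. You actually go further than the paper by working out the remainder exponents $N^{-2/(m+2)}$ and $N^{-1/(m+2)}$ and by flagging the genuine delicacy in controlling $|c_N^\ast-c^\ast|$ when $\psi$ is a lattice step function rather than a polynomial---a point the paper glosses over entirely.
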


The well-posedness of this model, as well as its numerical discretization, can be studied by adapting the arguments given in~\cite{ColliSavare} and~\cite{Sanfelici}.

An example of a two-dimensional dynamics produced by the model described above is given in Figure~\ref{Fig:2D}. We fix $d=0.05$ as for the one-dimensional models; then, we choose $d^\ast$ and $c^\ast$ in such a way that~\eqref{Eq:System2D} is satisfied for $n=256$ by $Q_N^D=\sqrt{2}$ and $Q_N^C=2$. This gives
\begin{equation}
\nonumber
d^\ast=3.8147\cdot 10^{-6}\quad\mbox{and}\quad c^\ast=3.9063\cdot 10^{-4}\;.
\end{equation}
The vector $\nu$ is chosen to be $e_1$. Figure~\ref{Fig:2D} shows the evolution of the action potential in the periodic box $B=[0,1]^2$ for $n=256$, starting from an initial stimulus $v_{|t=0}=1$ applied to the neurons lying in the circle of radius $1/32$ around the center of the box. The stimulus propagates in all directions, but since $c^\ast>0$ the speed of propagation is 
faster in the direction of $-\nu$.

\begin{figure}[t]
\centering
  \subfigure{\includegraphics[width=0.48\textwidth]{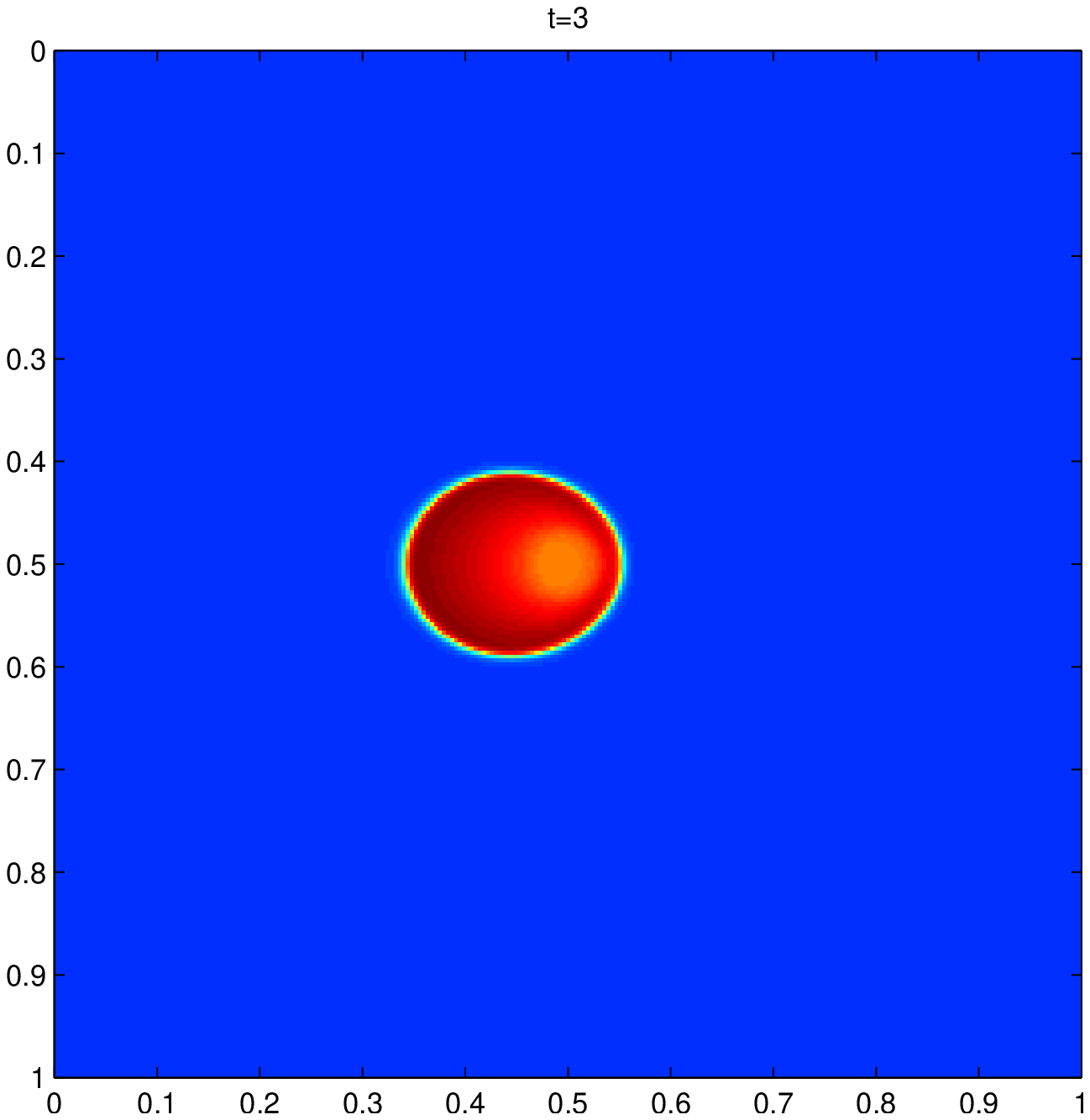}}
  \subfigure{\includegraphics[width=0.48\textwidth]{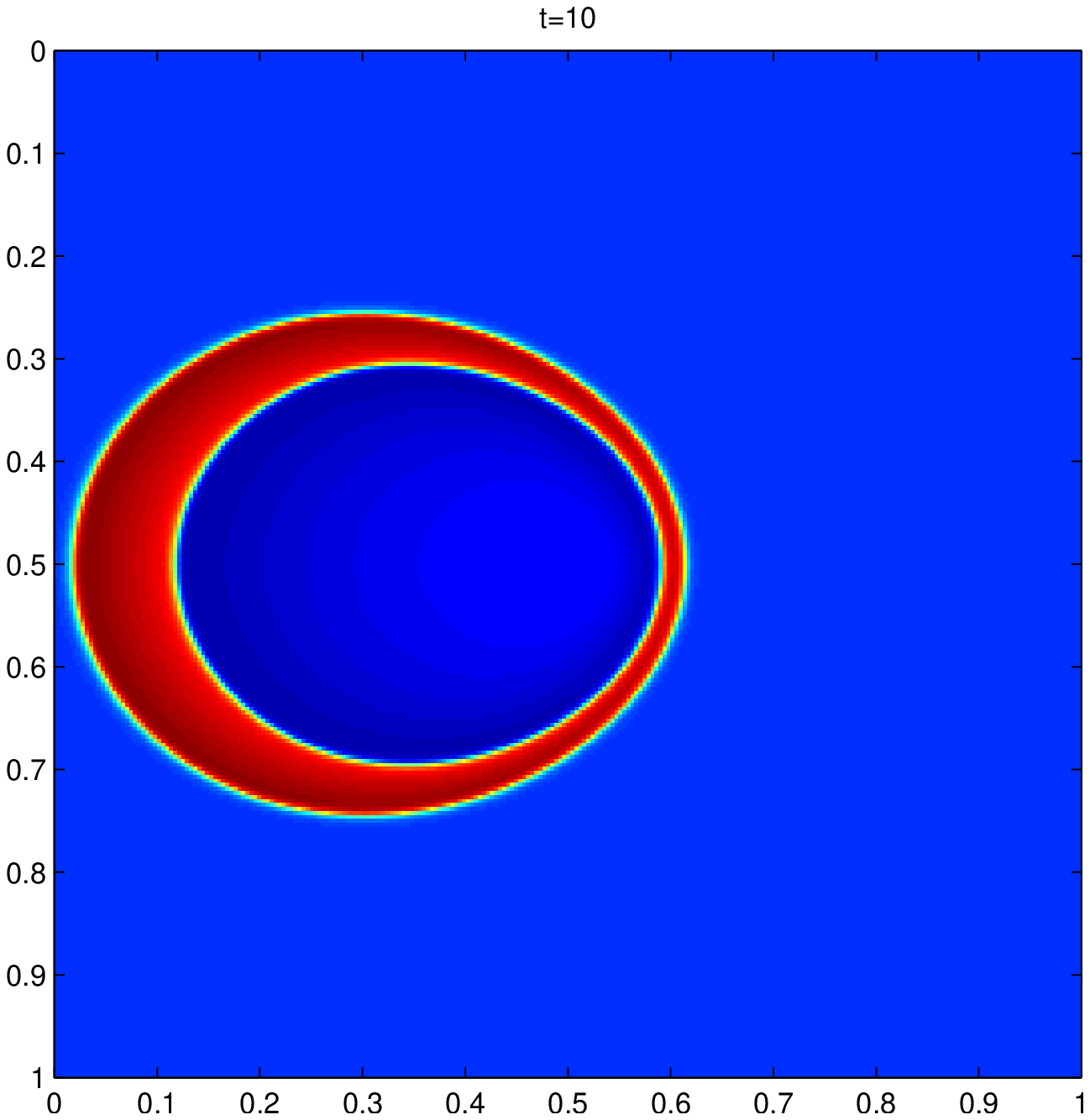}}\\\vspace{-0.65cm}
  \subfigure{\includegraphics[width=0.48\textwidth]{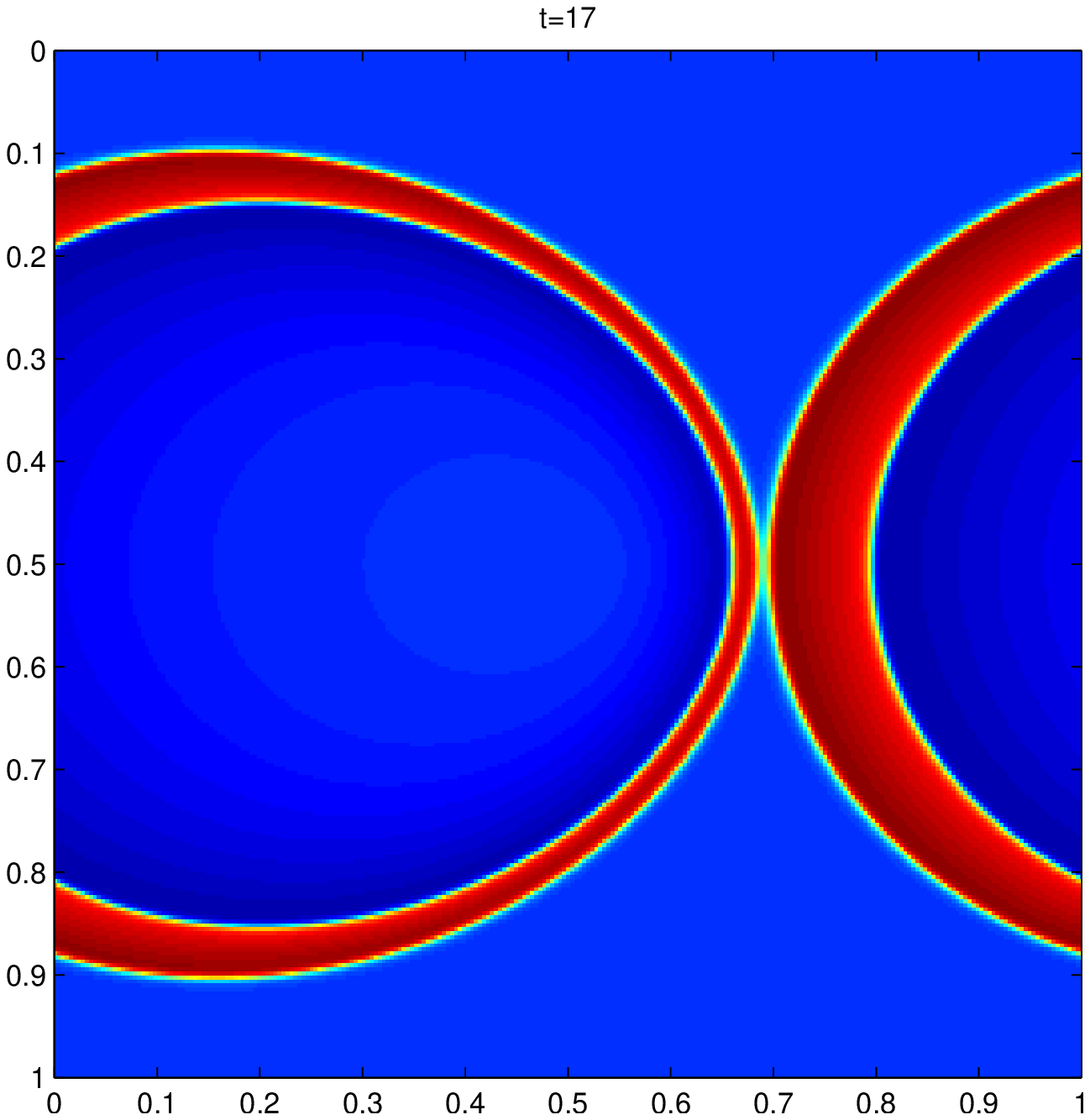}}
  \subfigure{\includegraphics[width=0.48\textwidth]{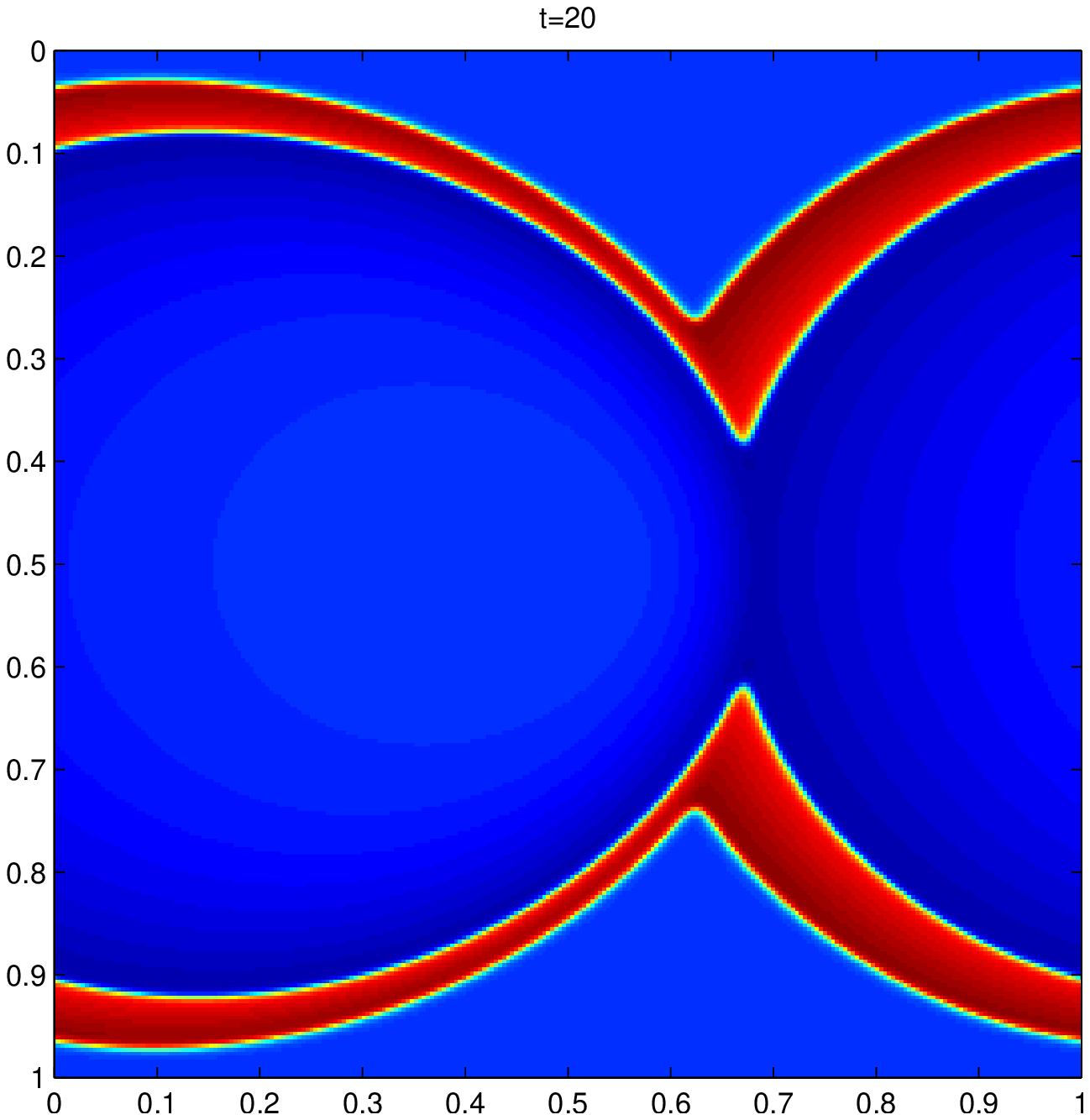}}
 \caption{\label{Fig:2D} 
 Two dimensional dynamics. Evolution of an initial stimulus by the discrete model of Theorem~\ref{theo:multiD} in a 
 $N=256\times 256$ lattice of neurons
 }
\end{figure}

\subsection{Pseudo-random connections}
While the models considered so far are fully deterministic, it is interesting to introduce some form of randomness and
monitor its effects. In the simplest form, this can be accomplished by perturbing the model considered above via 
 a (pseudo-)random removal of a fixed percentage of links among neurons. Connections to each neuron are turned-off with uniform distribution in the given percentage, independently of the other neurons; thus, the set ${\mathcal Q}(i)$ in~\eqref{Eq:FHNiQ}
does depend upon $i$, in a (pseudo-)random manner.
 
As an example, we keep the same parameters  $d=0.05$, $Q_N^D=\sqrt{2}$, $Q_N^C=2$ and  $n=256$, as well as the same initial datum as above, and we choose to turn 30\% of connections off. In Figure~\ref{Fig:2DRand}, the resulting dynamics at the same time instants as in Figure~\ref{Fig:2D} is shown. The random effects on the patterns are apparent. The reduction of active connections is reflected by a weaker propagation strength. The excitation front travels leftward only, with a lower speed than in the deterministic case. Furthermore, contours are irregular and, in some realizations not shown here, even disconnected. 
\begin{figure}[t]
\centering
  \subfigure{\includegraphics[width=0.48\textwidth]{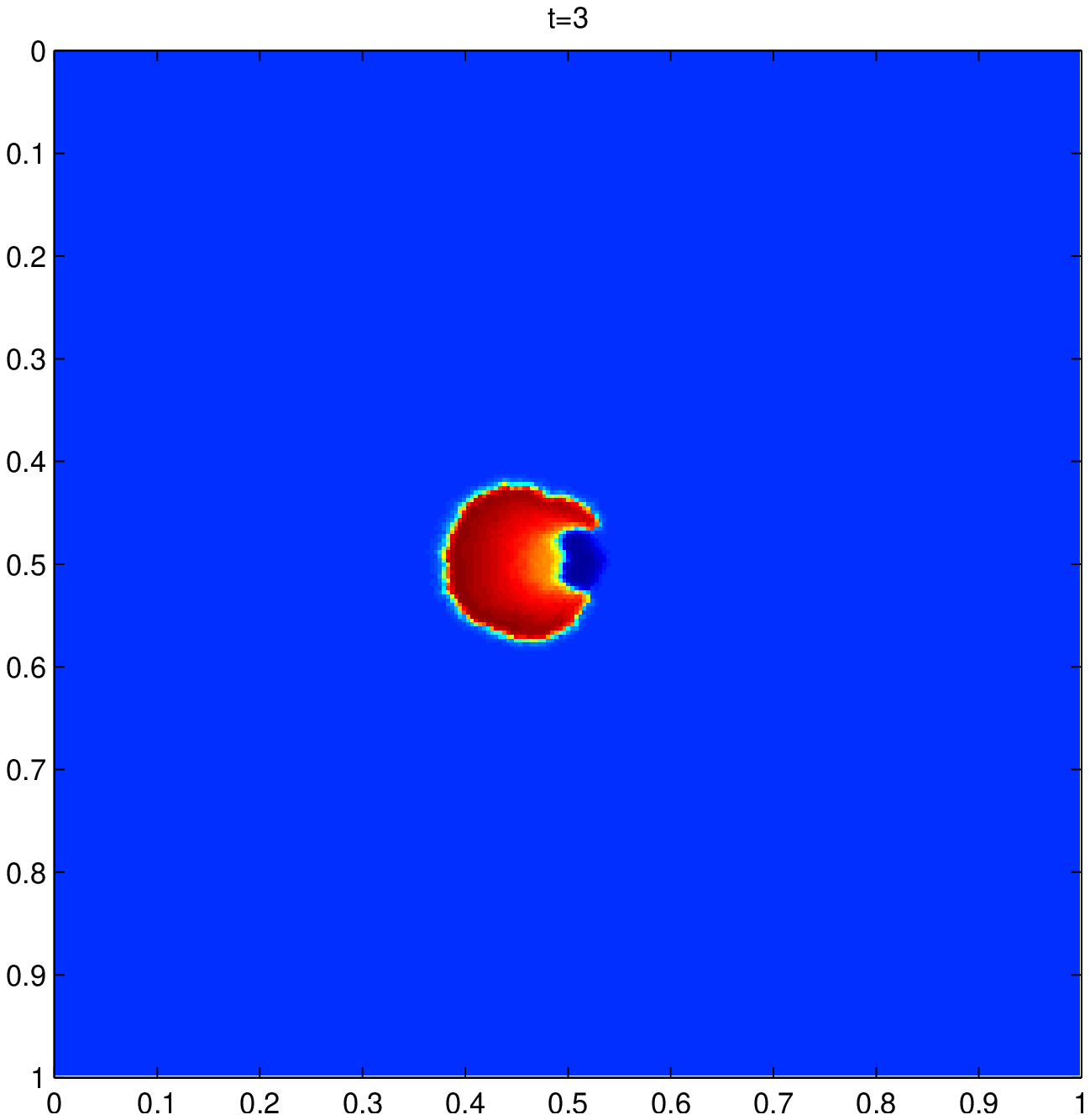}}
  \subfigure{\includegraphics[width=0.48\textwidth]{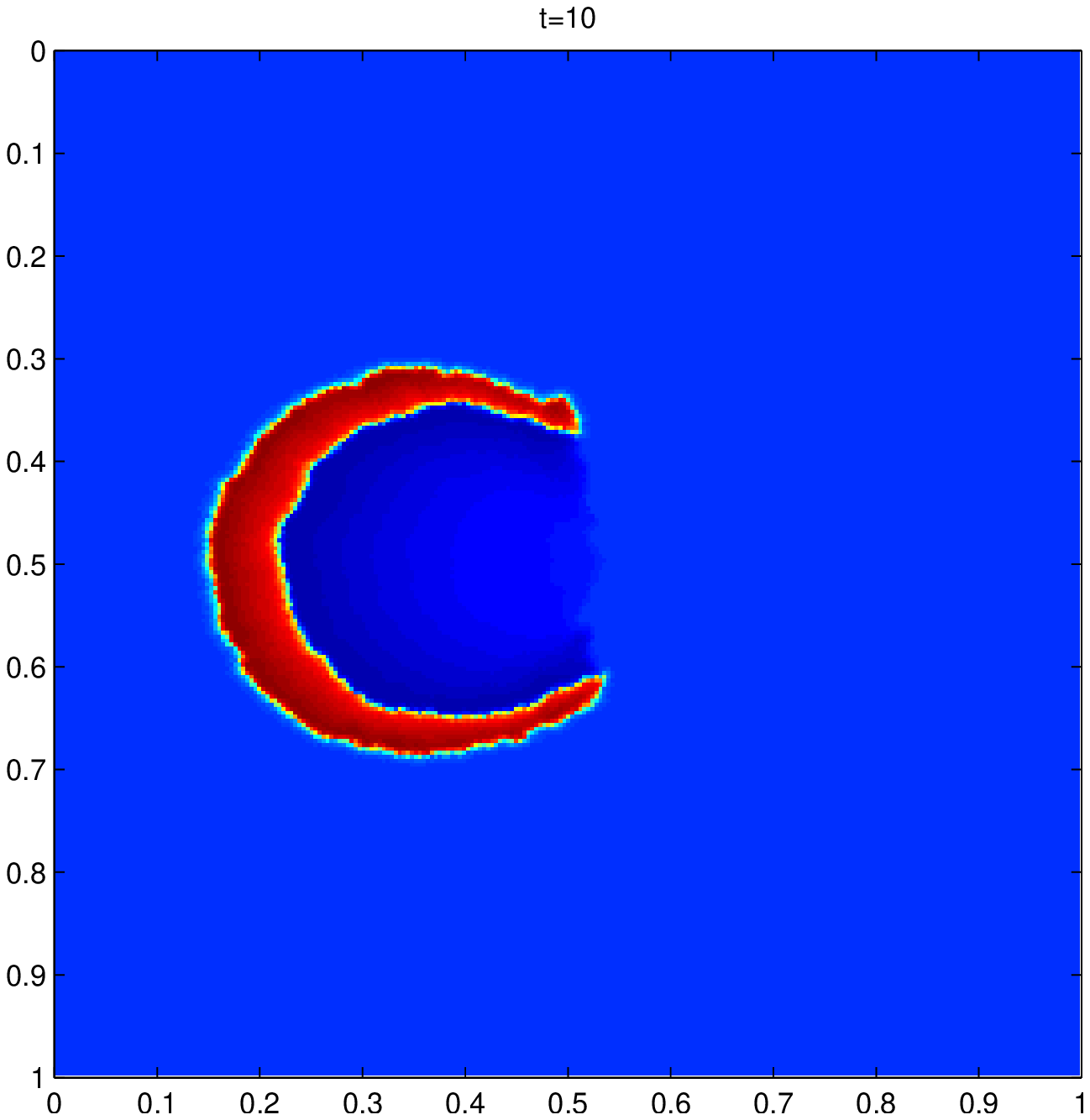}}\\\vspace{-0.65cm}
  \subfigure{\includegraphics[width=0.48\textwidth]{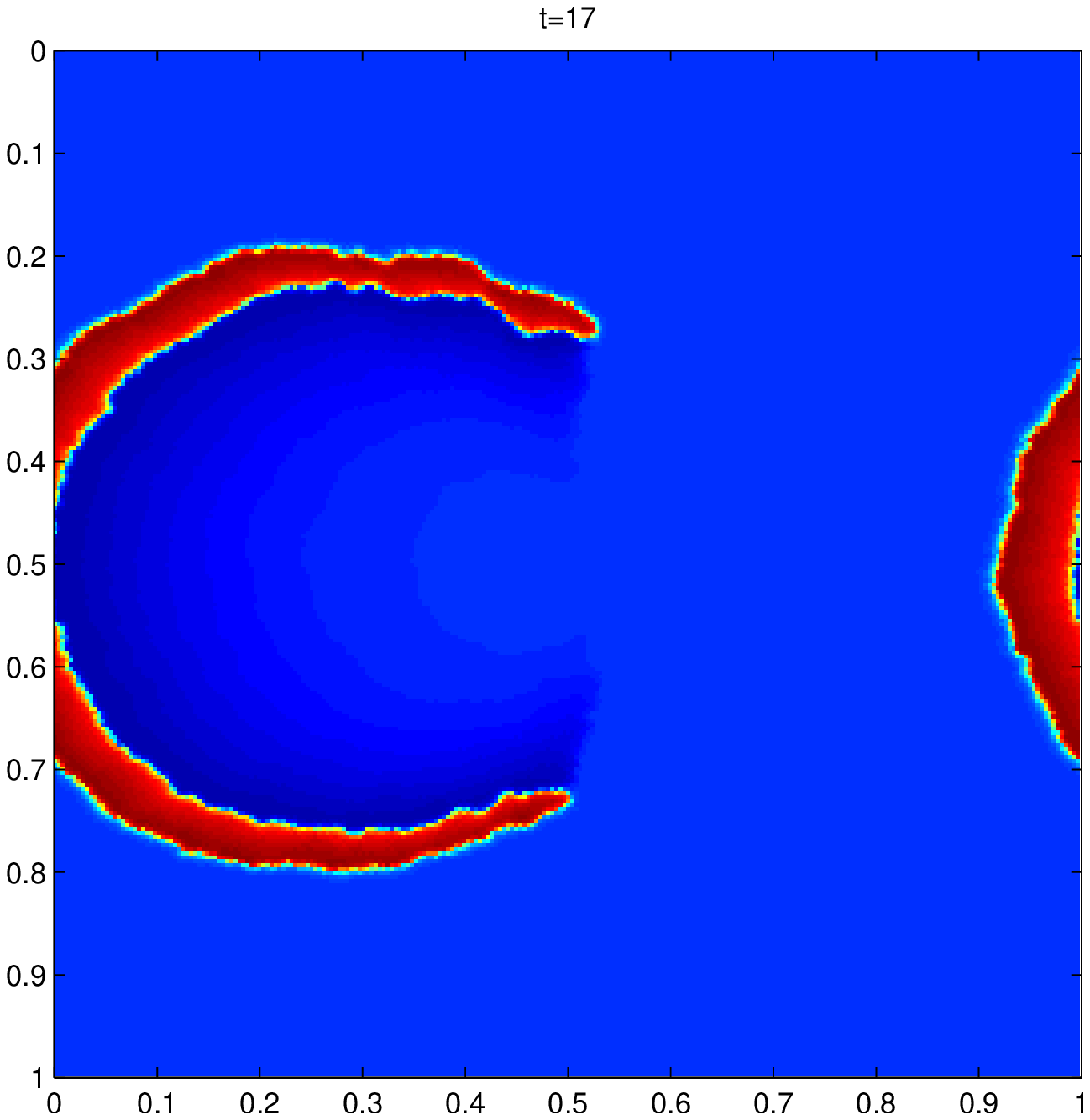}}
  \subfigure{\includegraphics[width=0.48\textwidth]{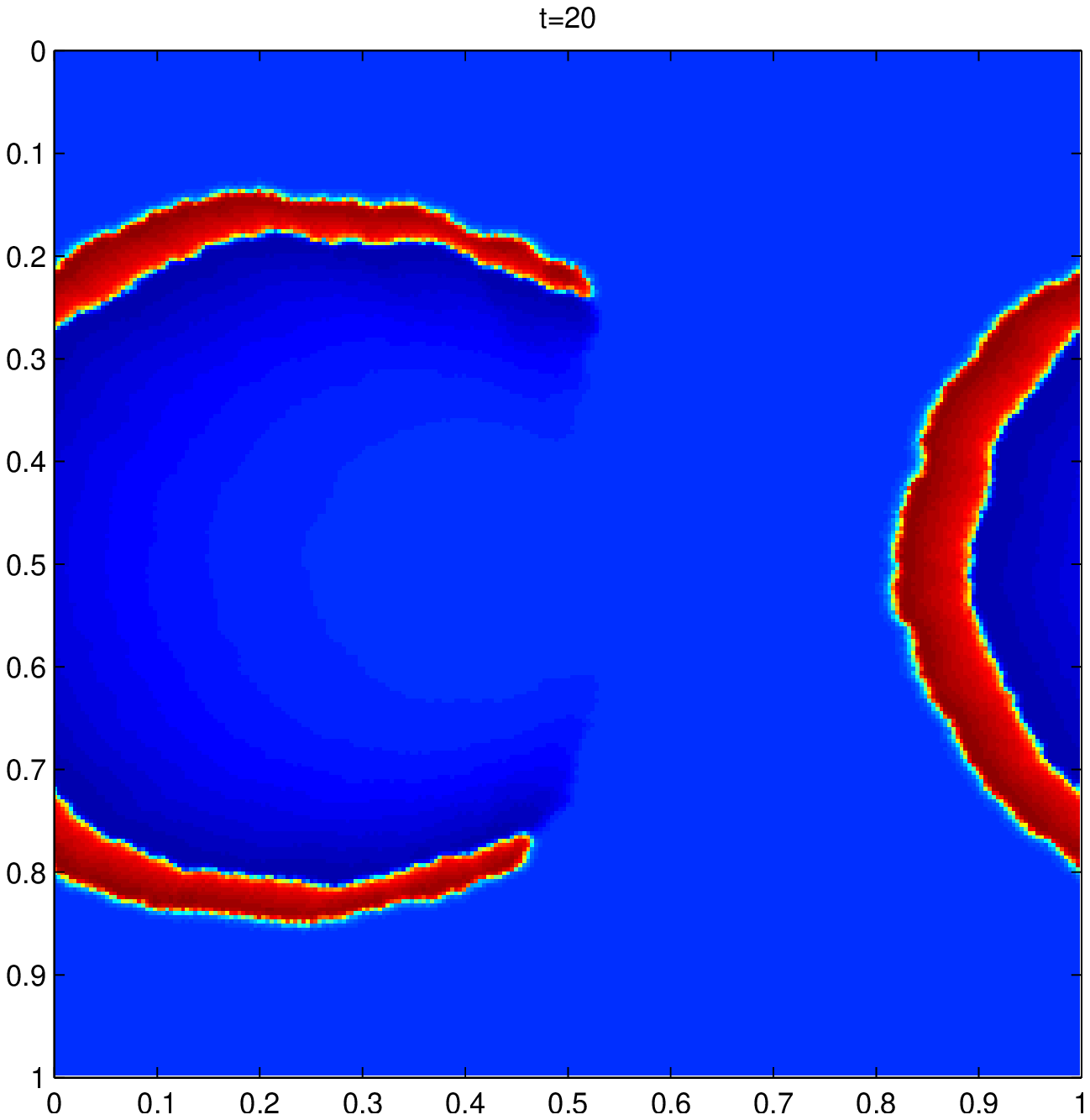}}%\\\vspace{-0.65cm}
%  \subfigure{\includegraphics[width=0.48\textwidth]{2DAsymRand2250}}
%  \subfigure{\includegraphics[width=0.48\textwidth]{2DAsymRand2300}}
 \caption{\label{Fig:2DRand} Evolution as in Fig.~\ref{Fig:2D}, but with 30\% of the connections turned-off in a pseudo-random way
 }
\end{figure}

\section{Conclusions}
We have considered an idealized network, formed by $N$ neurons individually described by the FitzHugh-Nagumo equations
and connected by electrical synapses. The limit for $N \to \infty$ of the resulting discrete model has been thoroughly
investigated, with the aim of identifying a model for a continuum of neurons having an equivalent behaviour. Two strategies
for passing to the limit have been analyzed. A more conventional approach is based on a fixed nearest-neighbour
connection topology accompanied by a suitable scaling of the diffusion coefficients. We have devised a new approach, in which the number of connections to any given neuron varies with $N$ according to a precise law, which simultaneously
guarantees the non-triviality of the limit and the locality of neuronal interactions.  Both approaches yield in the limit 
a pde-based model, in which the distribution of action potential obeys a nonlinear reaction-convection-diffusion equation; 
convection accounts for the possible lack of symmetry in the connection topology. Several convergence issues are
discussed, both theoretically and numerically.

Based on the present study, in a forthcoming work we will consider more realistic models describing both electrical and
chemical synapses. The discrete models here analyzed will be coupled to models of chemical interactions within a population 
of excitatory/inhibitory neurons, such as those given in~\cite{Ermentrout}, eq.(9.6). Again, the focus will be on the limit process   
leading to coupled continuous models.

\section*{Acknowledgments} We would like to thank Piero Colli Franzone, Fabio Fagnani, Giovanni Naldi, Thierry Nieus, Luigi Preziosi and Giuseppe Savar\`e for enlightening discussions on various aspects of the present work.

% You may incorporate your references as follows in your main tex file.
% Using BibTex is not recommended but can be handled.

%\medskip
% The data information below will be filled by AIMS editorial staff
%Received xxxx 20xx; revised xxxx 20xx.
%\medskip

\end{document}